\theoremstyle{plain}
\newtheorem{thm}{Theorem}[section]
\newtheorem{lem}[thm]{Lemma}
\theoremstyle{definition}
\newcommand{\wt}{\widetilde}
\newcommand{\Om}{\Omega}
\newcommand{\Ga}{\Gamma}
\newcommand{\La}{\Lambda}
\newcommand{\Sig}{\Sigma}
\newcommand{\hp}{\hphantom}
\begin{document}

\begin{center}{\Large \textbf{
Anomaly diagnosis via symmetry restriction in two-dimensional lattice systems\\
}}\end{center}

\begin{center}
Kyle Kawagoe\textsuperscript{1} and
Wilbur Shirley\textsuperscript{2,3}
\end{center}

\begin{center}
{\bf 1} Department of Physics, Department of Mathematics, and Center for Quantum Information Science and Engineering, The Ohio State University, Columbus, OH
\\
{\bf 2} Leinweber Institute for Theoretical Physics, University of Chicago, Chicago, IL
\\
{\bf 3} School of Natural Sciences, Institute for Advanced Study, Princeton, NJ
\\
\end{center}

\begin{center}
\today
\end{center}

\section*{Abstract}
{
We describe a method for computing the anomaly of any finite unitary symmetry group $G$ acting by finite-depth quantum circuits on a two-dimensional lattice system. 
The anomaly is characterized by an index valued in the cohomology group $H^4(G,U(1))$, which generalizes the Else-Nayak index for locality preserving symmetries of quantum spin chains. We show that a nontrivial index precludes the existence of a trivially gapped symmetric Hamiltonian; it is also an obstruction to ``onsiteability" of the symmetry action.
We illustrate our method via a simple example with $G=\mathbb{Z}_2\times\mathbb{Z}_2\times\mathbb{Z}_2\times\mathbb{Z}_2$. Finally, we provide a diagrammatic interpretation of the anomaly formula which hints at a higher categorical structure.
}

\vspace{10pt}
\noindent\rule{\textwidth}{1pt}
\tableofcontents
\noindent\rule{\textwidth}{1pt}
\vspace{10pt}

\section {Introduction}

A symmetry of a quantum many-body system is \textit{anomalous} if it constrains the low energy dynamics of the system. Specifically, an anomaly precludes the possibility of a symmetric, gapped, local Hamiltonian with a unique invertible ground state \cite{Lieb,Affleck,Oshikawa,Hastings,Ogata1,Ogata2,KapustinSopenko}. A system with an anomalous symmetry must instead have one (or more) of the following nontrivial features: gapless modes, spontaneous symmetry breaking, or fractionalized excitations. Such anomalous symmetries arise naturally in the low energy theories describing spatial boundaries of symmetry-protected topological (SPT) phases \cite{CallanHarvey,Ludwig,LevinGu,ChenSPT,Senthil,Wen}.

Though they bear dynamical consequences, anomalies themselves are a purely kinematic property of the symmetry operators. For instance, in zero spatial dimensions (0D),\footnote{Throughout, we denote $d$ spatial dimensions by $d$D.} a unitary symmetry is anomalous when it acts on the Hilbert space as a nontrivial projective representation of the symmetry group $G$. This correspondence allows for a simple classification of 0D anomalies in terms of the cohomology group $H^2(G,U(1))$ \cite{Chen1D,Schuch1D,Pollmann1D}.\footnote{We note that 0D anomalies are in one-to-one correspondence with 1D SPTs.} In higher dimensions, we advocate that an anomaly should be viewed as a form of entanglement borne by the symmetry operators. For unitary, internal symmetries in 1D and 2D bosonic systems, this form of entanglement is characterized by indices valued in the cohomology groups $H^3(G,U(1))$ and $H^4(G,U(1))$, respectively \cite{KapustinThorngren,ElseNayak}. This scheme mirrors the classification of SPT phases in one higher dimension \cite{ChenSPT}.

In general, however, it is not clear how to define these anomaly indices in specific systems. It is thus an important challenge to develop systematic methods to identify the anomaly class of a symmetry given its microscopic form. In 1D, this problem has been studied from several vantage points. Chen, Liu, and Wen first explained how to compute the anomaly index $[\alpha]\in H^3(G,U(1))$ in the special case of symmetries represented by matrix product unitary operators \cite{CZX}. Later, Else and Nayak presented a method applicable to any symmetry acting by finite depth quantum circuits (FDQC), based on the idea of spatially restricting the symmetry action to a finite interval \cite{ElseNayak}. More recent works have explored methods that identify the anomaly 3-cocycle with the $F$-symbols characterizing fusion of domain walls\cite{KawagoeLevin} and symmetry defects\cite{Sahand}. Furthermore, approaches to computing anomalies in conformally invariant SPT boundary theories have been studied in \cite{CFT1,CFT2,CFT3,CFT4,CFT5}.

In contrast, such methods in 2D are less well understood. The first inroad was made by Else and Nayak, who described a solution in the special case of symmetries with a ``nearly on-site" form \cite{ElseNayak}.\footnote{We call a symmetry ``nearly on-site" if it is of the form $U_g=N_gS_g$ where $S_g=\sum_\alpha\ket{g\alpha}\bra{a}$ for some on-site symmetry action $\alpha\to g\alpha$ on the classical label $\alpha$, and $N_g=\sum_\alpha e^{i\mathcal{N}_g[\alpha]}\ket{\alpha}\bra{\alpha}$ is a phase factor determined by a local functional $\mathcal{N}_g$ of the configuration $\alpha$.} More recently, Kobayashi \textit{et. al.} have shown how to compute anomaly invariants for abelian symmetry groups by studying the algebraic structure of symmetry defect loops \cite{Kobayashi}. Although these works offer valuable insights, they do not address the problem in full generality. The purpose of our work is to provide a systematic method for computing the anomaly index $[\omega]\in H^4(G,U(1))$ of any unitary symmetry acting by FDQCs on a 2D lattice system composed of qudits/spins. We argue that a nontrivial $H^4(G,U(1))$ index constitutes an obstruction to a local symmetric Hamiltonian with a unique, invertible, gapped ground state. Hence we conclude that the index we define correctly labels the anomaly. Our method can in turn be used to identify the SPT order of a bulk 3D system, whenever the 2D boundary theory has a tensor product Hilbert space.

Our method is a generalization of the Else-Nayak method for 1D systems, and reduces to their 2D solution in the case of ``nearly on-site" symmetries. Underlying the calculation is the essential idea that an anomaly is an obstruction to spatially restricting the symmetry to a finite region, such that the restricted symmetries themselves form a representation of the symmetry group. Intuitively, this obstruction arises from the nontrivial entanglement carried by anomalous symmetry operators, which prevents the symmetry action in one region from being decoupled from the action in the region's complement. In other words, the anomaly is an obstruction to ``onsiteability" of the symmetry action.\footnote{A precise definition of the notion of ``onsiteability" is given in Sec.~\ref{sec:discussion}.} In 1D it has been shown that the anomaly index $[\alpha]\in H^3(G,U(1))$ is the complete obstruction to onsiteability \cite{SeifnashriShirley}. In contrast, in 2D the anomaly index $[\omega]\in H^4(G,U(1))$ is not the only such obstruction---there is an additional obstruction characterized by an index $[\nu]\in H^2(G,\mathbb{Q}_+)$ where $\mathbb{Q}_+$ is the group of GNVW indices \cite{GNVW} of 1D quantum cellular automata (QCA) \cite{Kapustin,H2}. Our procedure not only identifies the $H^4$ anomaly index, but also computes the $H^2$ index as a byproduct of the anomaly computation. When the $H^2$ index is nontrivial, an additional step must be taken in the procedure in order to ``cancel" the $H^2$ index before proceeding with the calculation. We emphasize that, while a nontrivial $H^4$ index indicates both an obstruction to onsiteability and an anomaly in the sense of a constraint on the low energy dynamics, a nontrivial $H^2$ index is an obstruction to onsiteability but is not indicative of an anomaly.\footnote{This point is elucidated in \cite{H2}, which explicitly constructs trivially gapped Hamiltonians whose symmetries have nontrivial $H^2(G,\mathbb{Q}_+)$ indices.}

The paper is organized as follows. In Sec.~\ref{sec:background}, we review basic notions of QCAs, FDQCs, and the GNVW index. In Sec.~\ref{sec:index} we present our procedure for anomaly diagnosis in 2D systems. In Sec.~\ref{sec:obstruction}, we argue that a 2D $G$-symmetry with a nontrivial $H^4(G,U(1))$ index is incompatible with a symmetric local Hamiltonian with a unique, invertible, gapped ground state. In Sec.~\ref{sec:example}, we discuss a simple example with $G=\mathbb{Z}_2\times\mathbb{Z}_2\times\mathbb{Z}_2\times\mathbb{Z}_2$. In Sec.~\ref{diagrams}, we provide a diagrammatic interpretation of the anomaly formula. In Sec.~\ref{sec:discussion} we conclude with an overview of future directions.

\section {Preliminaries}
\label{sec:background}

In this work, we consider systems defined on a finite 2D lattice $\Lambda$. Our interest is in physical properties that can be deduced from studying a finite, local patch of the system. For our purposes boundary conditions will thus play no role. Throughout, we consider a tensor product Hilbert space $\mathcal{H}=\bigotimes_{i\in\Lambda}\mathcal{H}_i$ where $\mathcal{H}_i$ is the local Hilbert space on site $i$. We assume that our system does not contain any fermionic degrees of freedom.

In this setting, a unitary operator that strictly preserves locality is referred to as a \textit{quantum cellular automaton} (QCA) \cite{Farrelly}.\footnote{In infinite volume systems, QCAs are defined as bounded spread $*$-isomorphisms of the local operator algebra. In finite systems, we may identify a QCA with the adjoint action of a unitary operator.} Specifically, a QCA is a unitary operator $U$ with the following property: there exists some $R>0$ such that, for every site $i\in\Lambda$ and every operator $O_i$ supported on $i$, the transformed operator $UO_iU^{-1}$ is supported in a disk of radius $R$ centered around $i$. The number $R$ is referred to as the \textit{range} of $U$; in general $R$ is much smaller than the system size.

A \textit{finite depth quantum circuit} (FDQC) is a unitary operator that can be represented as a quantum circuit consisting of a finite number of layers of non-overlapping quantum gates of uniformly bounded diameter.\footnote{A quantum gate is a unitary operator supported on a finite number of sites. The diameter of a gate is the diameter of the smallest ball containing the support of the gate.} The number of layers in a FDQC is referred to as its depth $D$. A FDQC of gate diameter $k$ and depth $D$ is manifestly a QCA with range $R<kD$. We note that the representation of a FDQC as a circuit is not canonical.

Although every FDQC is a QCA, the converse does not hold: Not every QCA can be realized as a FDQC \cite{GNVW,FHH}. A QCA with this property is regarded as \textit{nontrivial}, and the set of QCAs modulo FDQCs forms an abelian group under composition \cite{HastingsFreedman}.\footnote{Strictly speaking, when classifying QCAs one should allow for stabilization, \textit{i.e.} identifying a QCA $U$ with $U\otimes\mathbbm{1}$, where $\mathbbm{1}$ acts on an ancillary system.} In 1D, the group of nontrivial QCAs is isomorphic to $\mathbb{Q}_+$, the multiplicative group of positive rational numbers \cite{GNVW,HastingsFreedman}. Accordingly, every 1D QCA is characterized by a rational number $\nu$ known as the GNVW index. Roughly speaking, this index quantifies the operator ``flow" along the chain. A simple example of a 1D QCA with GNVW index equal to an integer $n$ is given by a uniform translation of an $n$-dimensional qudit chain by one site to the right. Conversely, a uniform translation by one site to the left has GNVW index $1/n$.

Given a QCA $U$ of range $R$ and a region $A\subset\Lambda$, a \textit{restriction} of $U$ to $A$ is a QCA $U^A$ which acts like $U$ deep within $A$, and acts like the identity far outside $A$. To make this notion precise, we need to specify what we mean by ``deep within" and ``far outside" the region $A$. This is somewhat arbitrary---to be concrete, we define the region $\text{Int}(A)\subset A$ as the set of sites in $A$ whose distance from $\partial A$ (the boundary of $A$) is at least, say, $5R$. Similarly, we define the region $\text{Ext}(A)\subset\overline{A}$ as the set of sites outside $A$ whose distance from $\partial A$ is at least $5R$. We then require a restriction $U^A$ to satisfy the following property: For any operator $O_i$ supported on a site $i$ in $\text{Int}(A)$ or $\text{Ext}(A)$,
\begin{equation}
    U^AO_i(U^A)^{-1}=\begin{cases}
    UO_iU^{-1} & i\in\text{Int}(A)\\
    O_i & i\in\text{Ext}(A).
\end{cases}
\end{equation}
By definition, $U^A$ is supported in $\overline{\text{Ext}(A)}$.

The following is an important fact: A QCA is trivial, \textit{i.e.} it is a FDQC, if and only if it admits restrictions to arbitrary finite regions \cite{HastingsFreedman}. More precisely, for any nontrivial QCA $V$ and sufficiently large finite region $A$, there does not exist a restriction of $V$ to $A$; on the other hand, for any FDQC $U$ and region $A$, there does exist a restriction $U^A$ of $U$ to $A$.\footnote{To see this, consider for instance a FDQC composed of all gates of $U$ that are strictly supported in $A$.} However, the restriction $U^A$ is not unique. Indeed, given any 1D QCA $\Sigma$ supported near $\partial A$, the operator $\Sigma U^A$ is also a valid restriction of $U$ to $A$.\footnote{This requires that $\Sigma$ is supported on a thin strip of width $10R$ centered around $\partial A$. Note that $\Sigma$ is a 2D FDQC even when it is nontrivial as a 1D QCA.}
Moreover, there is no canonical restriction of $U$---in general, a choice of restriction of $U$ is inherently arbitrary.

\section {The anomaly index}
\label{sec:index}

In this section, we explain our procedure for computing the anomaly of a finite group $G$-symmetry in a 2D lattice system. Before delving into a precise exposition of the procedure, we provide a brief overview as a guide for the reader.

\subsection{Overview}

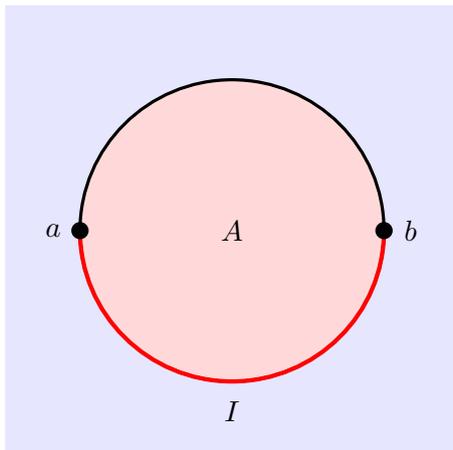
\begin{figure}
    \centering
    \begin{tikzpicture}
        \filldraw[thick,white,fill=blue!10](-1,-3)--(-1,3)--(5,3)--(5,-3)--(-1,-3);
        \filldraw[very thick,black,fill=red!15](2,0) circle(2);
        \draw[ultra thick,red](4,0) arc(0:-180:2);
        \node at (2,0){$A$};
        \node at (2,-2.4){$I$};
        \filldraw[thick,black](0,0) circle(.1);
        \filldraw[thick,black](4,0) circle(.1);
        \node at (-.35,0){$a$};
        \node at (4.35,0){$b$};
    \end{tikzpicture}
    \caption{Spatial arrangement of the disk $A$, interval $I\in\partial A$, and its endpoints $a$ and $b$.}
    \label{fig:regions}
\end{figure}

We consider a tensor product Hilbert space $\mathcal{H}=\bigotimes_{i\in\Lambda}\mathcal{H}_i$ on a 2D lattice $\Lambda$. The input to the procedure is a collection $\{U_g\}_{g\in G}$ of FDQCs satisfying the group multiplication law $U_gU_h\propto U_{gh}$ (possibly up to a phase).\footnote{We do not assume that the symmetry operators are translation invariant, nor do we assume that the lattice itself has any spatial symmetries.} The output consists of a 2-cocycle $\nu:G\times G\to\mathbb{Q}_+$, and a 4-cocycle $\omega:G\times G\times G\times G\to U(1)$. At various steps, the procedure involves making certain arbitrary choices, and the cocycles that are obtained depend on these choices. Crucially, the resulting ambiguity in $\nu$ is precisely a 2-coboundary, and the ambiguity in $\omega$ is precisely a 4-coboundary. Thus, the procedure yields well-defined cohomology classes $[\nu]\in H^2(G,\mathbb{Q}_+)$ and $[\omega]\in H^4(G,U(1))$. As discussed in the introduction, $[\omega]$ labels the anomaly of the symmetry in the sense of a constraint on low energy dynamics, whereas $[\nu]$ is independent of such dynamical constraints. Both indices, when nontrivial, represent obstructions to onsiteability of the symmetry.

Our procedure consists of the following set of concrete steps (see below for an explanation of notation conventions):
\begin{enumerate}[leftmargin=*]
    \item Choose a large disk $A$ and an interval $I=[a,b]\subset\partial A$, as depicted in Fig.~\ref{fig:regions}.

    \item Choose a restriction $U^A_g$ of $U_g$ to $A$, for all $g\in G$.
    
    \item Define the operator $\Om_{g,h}\equiv U^A_gU^A_h(U^A_{gh})^{-1}$ for all $g,h\in G$. By definition $\Om_{g,h}$ is a 1D QCA supported on a thin strip along $\partial A$. Let $\nu(g,h)$ be the GNVW index of $\Om_{g,h}$. The function $\nu:G\times G\to\mathbb{Q}_+$ is a 2-cocycle, \textit{i.e.} it satisfies condition (\ref{eq:2cocycle}).
    
    \item Introduce an ancillary Hilbert space $\mathcal{H}_{\partial A}$ describing a 1D chain of qudits living along $\partial A$, and define the enlarged Hilbert space $\mathcal{H}'=\mathcal{H}\otimes\mathcal{H}_{\partial A}$. Define a canonical 1D QCA $T_{g,h}$ on $\mathcal{H}_{\partial A}$ with GNVW index $\nu(g,h)^{-1}$. Finally let $\Om'_{g,h}=\Om_{g,h}\otimes T_{g,h}$, which is a 1D FDQC.
    
    \item Choose a restriction $\Om^I_{g,h}$ of $\Om'_{g,h}$ to the vicinity of the interval $I$, for all $g,h\in G$.
    
    \item Define the operator
    \begin{equation*}
        \Gamma_{g,h,k}\equiv\Om^I_{g,h}\Om^I_{gh,k}\left(\hp{}^g\Om^I_{h,k}\Om^I_{g,hk}\right)^{-1}
    \end{equation*}
    for all $g,h,k\in G$. By definition $\Gamma_{g,h,k}$ is a 0D QCA supported near points $a$ and $b$.
    
    \item Choose a restriction $\Ga^a_{g,h,k}$ of $\Ga_{g,h,k}$ to point $a$, for all $g,h,k\in G$.
    
    \item Decompose $I$ into two subintervals $L$ and $R$, as depicted in Fig.~\ref{fig:LR}. Choose an arbitrary decomposition $\Om^I_{g,h}=\Om^L_{g,h}\Om^R_{g,h}$ where $\Om^L_{g,h}$ ($\Om^R_{g,h}$) are 1D FDQCs supported near $L$ ($R$). Define the operator
    \begin{equation*}
        \Delta^a_{g,h,(k,l)}\equiv\hp{}^{(g,h)\cdot gh}\Om^L_{k,l}\hp{}\left(\hp{}^{g\cdot h}\Om^L_{k,l}\right)^{-1}
    \end{equation*}
    for all $g,h,k,l\in G$. By definition, $\Delta^a_{g,h,(k,l)}$ is supported near point $a$.
    
    \item Finally, define the function $\omega:G\times G\times G\times G\to U(1)$ as follows:
    \begin{equation*}
        \omega(g,h,k,l)\equiv\Ga^a_{g,h,k}\cdot\hp{}^{^g(h,k)}\Ga^a_{g,hk,l}\cdot\hp{}^g\Ga^a_{h,k,l}\left(\hp{}^{(g,h)}\Ga^a_{gh,k,l}\cdot\hp{}\Delta^a_{g,h,(k,l)}\cdot\hp{}^{^{g\cdot h}(k,l)}\Ga^a_{g,h,kl}\right)^{-1}.
    \end{equation*}
    By construction, $\omega$ is a 4-cocycle, \textit{i.e.} it satisfies condition (\ref{eq:4cocycle}).
\end{enumerate}

\begin{figure}
    \centering
    \begin{tikzpicture}
        \draw[ultra thick,red](4,0) arc(0:-180:2);
        \node at (2,-2.4){$c$};
        \node at (.35,-1.7){$L$};
        \node at (3.65,-1.7){$R$};
        \filldraw[thick,black](0,0) circle(.1);
        \filldraw[thick,black](4,0) circle(.1);
        \filldraw[thick,black](2,-2) circle(.1);
        \node at (-.35,0){$a$};
        \node at (4.35,0){$b$};
    \end{tikzpicture}
    \caption{Decomposition of the interval $I$ into two subintervals $L=[a,c]$ and $R=[c,b]$.}
    \label{fig:LR}
\end{figure}
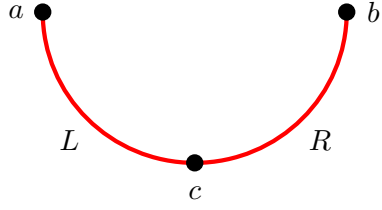

\noindent\textbf{Notation}: We have adopted the following notational conventions. First, $^VW\equiv VWV^{-1}$ for any pair of unitary operators $V$ and $W$. Moreover, $U^A_g$ is denoted by $g$ and $\Om^I_{g,h}$ by $(g,h)$ whenever they appear within a left superscript. For instance,
\begin{align}
    ^gW&\equiv U^A_gW(U^A_g)^{-1},\\^{g\cdot h}W&\equiv U^A_gU^A_hW(U^A_gU^A_h)^{-1},\\\hp{}^{(g,h)}W&\equiv\Om^I_{g,h}W(\Om^I_{g,h})^{-1},\\\hp{}^{^g(h,k)}W&\equiv U^A_g\Om^I_{h,k}(U^A_g)^{-1}W\big(U^A_g\Om^I_{h,k}(U^A_g)^{-1}\big)^{-1}.
\end{align}

\noindent\textbf{Remark:} It is possible to choose operator restrictions such that the output 4-cocycle $\omega$ is automatically ``normalized", \textit{i.e.} $\omega(g,h,k,l)=1$ when any member of the 4-tuple $g,h,k,l$ is the identity element. To ensure that $\omega$ is normalized, one must choose $U^A_1$ to be the identity operator, $\Om^I_{g,h}$ to be the identity operator whenever $g=1$ or $h=1$, and $\Ga^a_{g,h,k}$ to be the identity operator when any of $g,h,k$ is the identity.

\subsection{Definition of the index}
\label{sec:explicit}

We now give a detailed exposition of the procedure following the list of steps in the overview above.
\begin{enumerate}[leftmargin=*]
    \item To begin, we choose a disk $A$ and an interval $I=[a,b]\subset\partial A$. The radius of $A$, and the length of $I$, are assumed to be much larger than $R$, where $R$ is the maximal range of the symmetry operators $\{U_g\}$.
    
    \item Next, we choose a restriction $U^A_g$ of each symmetry operator $U_g$ to the disk $A$.

    \item The restricted symmetries obey the group multiplication law \textit{up to} composition by a 1D QCA supported near $\partial A$. In particular, for each pair $g,h,\in G$, define the 1D QCA
    \begin{equation}
        \Om_{g,h}\equiv U^A_g U^A_h\left(U^A_{gh}\right)^{-1}.
    \end{equation}
    For every triple $g,h,k\in G$, the $\Om$ operators obey a constraint which we refer to as the ``non-abelian 2-cocycle condition" \cite{brown1982cohomology,Muger,ElseNayak}:
    \begin{equation}
        \label{eq:NA2cocycle}
        \Om_{g,h}\Om_{gh,k}=\hp{}^g\Om_{h,k}\Om_{g,hk}.
    \end{equation}
    This constraint is derived by using associativity of the restricted symmetries: on one hand,
    \begin{align}
         (U^A_g U^A_h)U^A_k&=\Om_{g,h} U^A_{gh} U^A_k=\Om_{g,h}\Om_{gh,k} U^A_{ghk}.
    \end{align}
    On the other hand,
    \begin{align}
        U^A_g(U^A_hU^A_k)&=U^A_g\Om_{h,k}U^A_{hk}=\hp{}^g\Om_{h,k}\Om_{g,hk}U^A_{ghk}.
    \end{align}
    Comparing the final expressions, we obtain (\ref{eq:NA2cocycle}). The function $\nu:G\times G\to\mathbb{Q}_+$ is defined as follows:
    \begin{equation}
        \nu(g,h)\equiv\text{Ind}(\Om_{g,h})
    \end{equation}
    where $\text{Ind}(W)$ denotes the GNVW index of a 1D QCA $W$.\footnote{Our convention is that a GNVW index $\nu>1$ corresponds to net operator flow in the counterclockwise direction, whereas $\nu<1$ corresponds to net operator flow in the clockwise direction.} It is straightforward to verify that $\nu$ satisfies the 2-cocycle condition (\ref{eq:2cocycle}), by evaluating the index of both sides of (\ref{eq:NA2cocycle}).

    \item We now introduce an ancillary 1D qudit chain living along $\partial A$. The purpose of adding ancillas is to define modified $\Om_{g,h}$ operators, denoted by $\Om'_{g,h}$, which have trivial GNVW index but still satisfy the non-abelian 2-cocycle condition:
    \begin{equation}
        \label{eq:NA2cocycletilde}
        \Om'_{g,h}\Om'_{gh,k}=\hp{}^g\Om'_{h,k}\Om'_{g,hk}.
    \end{equation}
    To do so, the 1D chain must be composed of qudits of various dimensions. In particular, let us write $\nu(g,h)=m(g,h)/n(g,h)$ where $m(g,h)$ and $n(g,h)$ are relatively prime positive integers. Then let $\{p_1,\ldots,p_k\}$ be the set of all prime numbers appearing in the prime decomposition of $m(g,h)$ or $n(g,h)$ for any pair $g,h\in G$. Finally, define the Hilbert space 
    \begin{equation}
        \mathcal{H}_{\partial A}=\bigotimes_{r=1}^k\mathcal{H}_r\quad\text{where}\quad\mathcal{H}_r=\bigotimes_{i\in\mathbb{Z}_L}\mathbb{C}^{p_r}.
    \end{equation}
    Here, $\mathbb{Z}_L$ denotes a chain of $L$ sites living along $\partial A$. Upon adding these ancillas, the total Hilbert space becomes $\mathcal{H}'=\mathcal{H}\otimes\mathcal{H}_{\partial A}$. We now define a canonical operator $T_\nu$ on $\mathcal{H}_{\partial A}$ such that $\text{Ind}(T_\nu)=\nu$ for any GNVW index $\nu$ that can be expressed in terms of the prime factors $\{p_1,\ldots,p_k\}$. To do so, we first express $\nu=m/n$ in terms of the prime decomposition of the relatively prime positive integers $m$ and $n$:
    \begin{equation}
        m=p_{1}^{i_1}\cdots p_{k}^{i_k}\quad\text{and} \quad n=p_{1}^{j_1}\cdots p_{k}^{j_k}
    \end{equation}
    where $i_1,\ldots,i_k,j_1,\ldots,j_k\in\mathbb{Z}_{\geq 0}$. Then define
    \begin{equation}
        T_\nu=t_{p_1}^{i_1}\cdots t_{p_k}^{i_k}\left(t_{p_1}^{j_1}\cdots t_{p_k}^{j_k}\right)^{-1},
    \end{equation}
    where $t_r$ denotes a uniform translation of the qudits in $\mathcal{H}_r$ in the counterclockwise direction. Crucially, these operators satisfy the condition
    \begin{equation}
        T_\nu T_{\nu'}=T_{\nu\cdot\nu'}.\label{eq:Tcondition}
    \end{equation}
    For each pair $g,h\in G$, we define the following operator on $\mathcal{H}_{\partial A}$:
    \begin{equation}
        T_{g,h}\equiv T_{\nu(g,h)}^{-1}.
    \end{equation}
    By definition $\text{Ind}(T_{g,h})=\nu(g,h)^{-1}$. By virtue of (\ref{eq:Tcondition}) and the 2-cocycle condition (\ref{eq:2cocycle}) on $\nu:G\times G\to\mathbb{Q}_+$, it follows that
    \begin{equation}
        T_{g,h}T_{gh,k}=T_{h,k}T_{g,hk}.\label{eq:2cocycleT}
    \end{equation}
    Finally, we define the operator $\Om'_{g,h}$ on $\mathcal{H}'$ for each pair $g,h\in G$:
    \begin{equation}
        \Om'_{g,h}\equiv\Om_{g,h}\otimes T_{g,h}.
    \end{equation}
    Clearly, $\text{Ind}(\Om'_{g,h})=1$, hence $\Om'_{g,h}$ is a 1D FDQC along $\partial A$. Moreover, the non-abelian 2-cocycle condition (\ref{eq:NA2cocycletilde}) follows from (\ref{eq:NA2cocycle}) and (\ref{eq:2cocycleT}).

    \item Next, we choose a restriction $\Om^I_{g,h}$ of each $\Om'_{g,h}$ to the vicinity of the interval $I=[a,b]\subset\partial M$.

    \item For each triple $g,h,k\in G$, the $\Om^I$ operators obey the non-abelian 2-cocycle condition \textit{up to} composition by a unitary operator $\Ga_{g,h,k}$ supported near $\partial I=\{a,b\}$:
    \begin{equation}\label{eq:Gamma_def}
        \Gamma_{g,h,k}\equiv\Om^I_{g,h}\Om^I_{gh,k}\left(\hp{}^g\Om^I_{h,k}\Om^I_{g,hk}\right)^{-1}.
    \end{equation}
    For every 4-tuple $g,h,k,k\in G$, the $\Ga$ operators obey a constraint we refer to as the ``non-abelian 3-cocycle condition":
    \begin{equation}
        \Ga_{g,h,k}\cdot\hp{}^{^g(h,k)}\Ga_{g,hk,l}\cdot\hp{}^g\Ga_{h,k,l}=\hp{}
        ^{(g,h)}\Ga_{gh,k,l}\cdot\hp{}\Delta_{g,h,(k,l)}\cdot\hp{}^{^{g\cdot h}(k,l)}\Ga_{g,h,kl},\label{eq:NA3cocycle}
    \end{equation}
    where we have introduced the operator\footnote{We give $\Delta_{g,h,(k,l)}$ its subscript because of the operators that appear in its definition.}
    \begin{equation}
        \Delta_{g,h,(k,l)}\equiv\hp{}^{(g,h)\cdot gh}\Om^I_{k,l}\left(\hp{}^{g\cdot h}\Om^I_{k,l}\right)^{-1}.
        \label{eq:Delta}
    \end{equation}
    To derive this constraint, we evaluate $\Om^I_{g,h}\Om^I_{gh,k}\Om^I_{ghk,l}$ in two different ways. On one hand,
    \begin{align}
        \Om^I_{g,h}\Om^I_{gh,k}\Om^I_{ghk,l}&=\Ga_{g,h,k}\cdot\hp{}^g\Om^I_{h,k}\Om^I_{g,hk}\Om^I_{ghk,l}\\
        &=\Ga_{g,h,k}\cdot\hp{}^{^g(h,k)}\Gamma_{g,hk,l}\cdot\hp{}^g(\Om^I_{h,k}\Om^I_{hk,l})\Om^I_{g,hkl}\\
        &=\Ga_{g,h,k}\cdot\hp{}^{^g(h,k)}\Ga_{g,hk,l}\cdot\hp{}^g\Ga_{h,k,l}\cdot\hp{}^{g\cdot h}\Om^I_{k,l}\cdot\hp{}^g\Om^I_{h,kl}\Om^I_{g,hkl}.\label{eq:3cocycle1}
    \end{align}
    On the other hand,
    \begin{align}
        \Om^I_{g,h}\Om^I_{gh,k}\Om^I_{ghk,l}&=\hp{}^{(g,h)}\Ga_{gh,k,l}\cdot\hp{}^{(g,h)\cdot gh}\Om^I_{k,l}\Om^I_{g,h}\Om^I_{gh,kl}\\
        &=\hp{}^{(g,h)}\Ga_{gh,k,l}\cdot\Delta_{g,h,(k,l)}\cdot\hp{}^{g\cdot h}\Om^I_{k,l}\Om^I_{g,h}\Om^I_{gh,kl}\\
        &=\hp{}^{(g,h)}\Ga_{gh,k,l}\cdot\Delta_{g,h,(k,l)}\cdot\hp{}^{^{g\cdot h}(k,l)}\Ga_{g,h,kl}\cdot\hp{}^{g\cdot h}\Om^I_{k,l}\cdot\hp{}^g\Om^I_{h,kl}\Om^I_{g,hkl}.
        \label{eq:3cocycle2}
    \end{align}
    Comparing (\ref{eq:3cocycle1}) and (\ref{eq:3cocycle2}), we obtain (\ref{eq:NA3cocycle}).

    \item Next, we choose a restriction $\Ga^a_{g,h,k}$ of $\Ga_{g,h,k}$ to the vicinity of point $a$. Clearly, this choice is unique up to multiplication by a $U(1)$ phase.

    \item We then define an operator $\Delta^a_{g,h,(k,l)}$, which is a restriction of $\Delta_{g,h,(k,l)}$ to the vicinity of point $a$. Unlike $\Ga^a_{g,h,k}$, the overall phase of $\Delta^a_{g,h,(k,l)}$ is not arbitrarily chosen; we may define it canonically. To do so, we first choose an arbitrary point $c$ in the interior of $I$, and define the intervals $L=[a,c]$ and $R=[c,b]$. Then, we arbitrarily decompose the operator $\Om^I_{k,l}$ as a product of two 1D FDQCs, one supported in the vicinity of $L$ and one supported in the vicinity of $R$. That is, $\Om^I_{k,l}=\Om^L_{k,l}\Om^R_{k,l}$.
    We then define
    \begin{align}\label{eq:Delta_a_def}
        \Delta^a_{g,h,(k,l)}&\equiv\hp{}^{(g,h)\cdot gh}\Om^L_{k,l}\hp{}\left(\hp{}^{g\cdot h}\Om^L_{k,l}\right)^{-1}.
    \end{align}
    Clearly this definition is independent of the choice of intervals $L$ and $R$, and the choice of operators $\Om^L_{g,h}$ and $\Om^R_{g,h}$.\footnote{We note that $\Delta^a_{g,h,(k,l)}$ can also be expressed using the ``commutator pairing" introduced in \cite{Kapustin}. In particular, let $d$ be a point to the left of $a$, and define the interval $J=[d,a]$. The commutator pairing between a 1D FDQC $A$ supported near $J$ and a 1D FDQC $B$ supported near $L=[a,c]$ is defined as the unitary $\eta(A,B)=ABA^{-1}B^{-1}$, which is supported near point $a$. Choose a restriction $\Om^K_{g,h}$ of $\Om_{g,h}$ to $K=J\cup L=[d,c]$ which coincides with $\Om^L_{g,h}$ near point $c$, and let $\Om^J_{g,h}=\Om^K_{g,h}(\Om^L_{g,h})^{-1}$. Then, via a simple calculation, we find that $\Delta^a_{g,h,(k,l)}=\eta(A,B)^{-1}$ where $A=\Om^J_{g,h}$ and $B=\hp{}^{(g,h)\cdot gh}\Om^L_{k,l}$.}

    \item Since the $\Gamma$ and $\Delta$ operators satisfy the non-abelian 3-cocycle condition (\ref{eq:3cocycle}), the corresponding $\Ga^a$ and $\Delta^a$ operators must obey an analogous constraint, up to a $U(1)$ phase. The function $\omega:G\times G\times G\times G\to U(1)$ is defined in terms of this phase. Specifically,
    \begin{equation}
        \omega(g,h,k,l)\equiv\Ga^a_{g,h,k}\cdot\hp{}^{^g(h,k)}\Ga^a_{g,hk,l}\cdot\hp{}^g\Ga^a_{h,k,l}\left(\hp{}^{(g,h)}\Ga^a_{gh,k,l}\cdot\hp{}\Delta^a_{g,h,(k,l)}\cdot\hp{}^{^{g\cdot h}(k,l)}\Ga^a_{g,h,kl}\right)^{-1}.
        \label{eq:omega}
    \end{equation}
    In Appendix~\ref{app:cocycle}, we demonstrate that $\omega$ is a 4-cocycle, \textit{i.e.} it satisfies the 4-cocycle condition (\ref{eq:4cocycle}).
\end{enumerate}

We have thus arrived at a formula for the anomaly 4-cocycle $\omega$, in terms of the operators $U_g^A$, $\Om^I_{g,h}$, $\Ga^a_{g,h,k}$, and $\Delta^a_{g,h,(k,l)}$. While this is a concrete expression, it may appear somewhat arcane. To illuminate its structure, in Section~\ref{diagrams} we provide a graphical representation of the formula and its derivation.

\subsection{Ambiguity of the output cocycles}
\label{sec:ambiguity}

Thus far, we have explained how to define a $\mathbb{Q}_+$-valued 2-cocycle $\nu$, and a $U(1)$-valued 4-cocycle $\omega$, via a series of concrete steps. However, there are arbitrary choices made at various points in this procedure, which all affect the cocycles that are obtained. We now explain why different choices always lead to cocycles belonging to the same cohomology classes in $H^2(G,\mathbb{Q}_+)$ and $H^4(G,U(1))$, respectively. We will also see why our procedure can produce \textit{any} pair of representative cocycles in their respective equivalence classes.
Thus, the invariants we compute are precisely elements of $H^2(G,\mathbb{Q}_+)$ and $H^4(G,U(1))$.

We will work backward, starting from the end of the procedure and working toward the start. The last choice made is the choice of restriction of $\Ga_{g,h,k}$ to the vicinity of point $a$, which has an inherent phase ambiguity. That is, instead of $\Ga^a_{g,h,k}$, we are free to instead choose $\rho(g,h,k)\Ga^a_{g,h,k}$ for an arbitrary $U(1)$ phase $\rho(g,h,k)$. Given this choice, instead of $\omega(g,h,k,l)$, we would obtain the 4-cocycle
\begin{equation}
    \tilde\omega(g,h,k,l)=\omega(g,h,kl)\frac{\rho(g,h,k)\rho(g,hk,l)\rho(h,k,l)}{\rho(gh,k,l)\rho(g,h,kl)}.
\end{equation}
We find that $\tilde\omega$ differs from $\omega$ by the 4-coboundary $d\rho$. Thus, different choices of $\Ga^a_{g,h,k}$ give rise to equivalent 4-cocycles.\footnote{We note that there is crucially no such ambiguity in the restriction $\Delta^a_{g,h,(k,l)}$ of $\Delta_{g,h,(k,l)}$, which is canonically defined, including its overall phase.}
Furthermore, since $\rho$ may be any 3-cochain, $\tilde{\omega}$ may be any 4-cocycle which is equivalent to $\omega$.

Continuing to work backward, there is a choice of restriction $\Om^I_{g,h}$ of $\Om'_{g,h}$. We show in Appendix~\ref{appendix:ambiguity1} that the freedom in this choice does not contribute any additional ambiguity to $\omega$ beyond the 4-coboundary ambiguity arising from the restriction of $\Ga_{g,h,k}$. Thus, different choices of $\Om^I_{g,h}$ give rise to equivalent 4-cocycles.

Next, consider the choice of symmetry restriction $U^A_g$. Instead of $U^A_g$, we could instead choose $\tilde U^A_g=\Sig_gU^A_g$ where $\Sig_g$ is an arbitrary 1D QCA acting along $\partial A$. Given this choice, instead of $\nu(g,h)$ we would obtain the 2-cocycle
\begin{equation}
    \tilde\nu(g,h)=\nu(g,h)\frac{\mu(g)\mu(h)}{\mu(gh)}
\end{equation}
where $\mu(g)=\text{Ind}(\Sig_g)$. We find that $\tilde\nu$ differs from $\nu$ by the 2-coboundary $d\mu$. Thus, different choices of symmetry restriction give rise to equivalent 2-cocycles. Similarly to before, since $\mu$ may be any 1-cochain, $\tilde\nu$ may be any $2$-cocycle which is equivalent to $\nu$. We show in Appendix~\ref{appendix:ambiguity2} that different choices of symmetry restriction also give rise to equivalent 4-cocycles.

Finally, we consider the choice of disk $A$ and interval $I\in\partial A$. Clearly, small deformations of these regions can be absorbed into the choice of restrictions $U_g^A$ and $\Om^I_{g,h}$. The cohomology classes of the output cocycles are thus invariant under such small deformations, and it follows that they are also invariant under large deformations. We conclude that our procedure yields well-defined cohomology classes $[\nu]\in H^2(G,\mathbb{Q}_+)$ and $[\omega]\in H^4(G,U(1))$. Clearly, these indices are additive under stacking of $G$-symmetric systems.

\section{Obstruction to a trivially gapped Hamiltonian}
\label{sec:obstruction}

In the previous section, we have defined a pair of indices, $[\nu]\in H^2(G,\mathbb{Q}_+)$ and $[\omega]\in H^4(G,U(1))$, for any $G$-symmetry acting by FDQCs on a 2D lattice system. Here, we justify the interpretation of $[\omega]$ as a label for the anomaly of the symmetry. That is, we show---under certain assumptions---that a $G$-symmetry with nontrivial $[\omega]$ does not admit a $G$-symmetric invertible state. 
Thus, there cannot exist a symmetric, gapped, local Hamiltonian with a unique invertible ground state.

We will work under the assumption that the defining procedure for the index $[\omega]$ can be appropriately modified to the setting of QCAs and FDQCs ``with tails". More precisely, a QCA with tails refers to a unitary that preserves locality of operators up to corrections that decay exponentially with distance. Similarly, a FDQC with tails refers to a circuit whose gates are supported on a finite number of sites, up to exponentially decaying corrections.\footnote{We note that a FDQC with tails is equivalent to a locally generated unitary, \textit{i.e.} finite time evolution by a time-dependent local Hamiltonian.} In 1D, the GNVW index is well-defined in the setting of QCAs with tails, and the $\mathbb{Q}_+$ classification of 1D QCAs modulo FDQCs is unaffected by allowing for tails \cite{Ranard}. Thus, we anticipate that  the procedure can be modified to allow for non-strict locality of all of the operators used to define $[\nu]$ and $[\omega]$.\footnote{This would require a suitable notion of ``restriction" of a QCA with tails.} We emphasize that there are subtleties involving finite size effects arising from the lack of strict locality; we do not attempt a rigorous treatment here. Henceforth, the terms QCA and FDQC will refer to operators with tails.

Bearing these subtleties in mind, we may now outline the argument. Recall that a state $\ket{\psi}$ is \textit{invertible} if there exists a state $\ket{\phi}$ living in an ancillary Hilbert space, and a FDQC $V$ such that $V(\ket{\psi}\otimes\ket{\phi})$ is a product state \cite{Kitaev1,Kitaev2}. Further recall that a state $\ket{\psi}$ is \textit{short-range entangled} (SRE) if there exists a FDQC $W$ such that $W(\ket{\psi}\otimes\ket{\psi_0})$ is a product state, where $\ket{\psi_0}$ is itself an ancillary product state \cite{LRE}.
Note that in general, short-range entanglement is a strictly stronger condition than invertibility.\footnote{We note that Kitaev uses an alternative definition of SRE that includes invertible states in the class of SRE states. Here, we use the stated definition.} However, in 1D bosonic systems it has been shown that all invertible states are also SRE \cite{KapustinYangSopenko}.\footnote{In 1D fermionic systems there are invertible states that are not SRE, such as the Kitaev chain \cite{KitaevWire}.}

In the following proof, we will ignore the need to tensor with ancillas to trivialize SRE states. In other words, for any SRE state $\ket{\psi}$, we assume there exists a FDQC $W$ such that $W\ket{\psi}$ is a product state. This assumption is not strictly necessary, and is merely invoked to streamline the proof.

We now demonstrate the following theorem:
\begin{thm}\label{thm}
    Suppose $\{U_g\}$ is a $G$-symmetry acting by FDQCs on a 2D lattice system, and there exists a $G$-symmetric invertible state $\ket{\psi}$. Then, $\{U_g\}$ has trivial anomaly index $[\omega]\in H^4(G,U(1))$.
\end{thm}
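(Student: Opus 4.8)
The plan is to reduce to the case in which the symmetry fixes a product state, and then to exhibit a set of choices in the defining procedure for which the output cocycle $\omega$ is \emph{identically} $1$; since by Section~\ref{sec:ambiguity} different choices change $\omega$ only by a $4$-coboundary, this forces $[\omega]$ to be trivial. By invertibility there is an ancillary state $\ket\phi$ and a FDQC $V$ with $V(\ket\psi\otimes\ket\phi)$ a product state $\ket0$. Letting the ancilla carry the trivial $G$-action, $\{U_g\otimes\mathbbm{1}\}$ is a symmetry of $\ket\psi\otimes\ket\phi$, so $\tilde U_g\equiv V(U_g\otimes\mathbbm{1})V^{-1}$ is a $G$-symmetry by FDQCs that fixes $\ket0$. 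Neither operation changes $[\omega]$: stacking with a trivially-acted ancilla is absorbed into the choice of restrictions, while conjugation by a fixed FDQC $V$ sends each operator produced by the procedure ($U^A_g$, $\Om_{g,h}$, $\Om^I_{g,h}$, $\Ga_{g,h,k}$, $\Ga^a_{g,h,k}$, $\Delta^a_{g,h,(k,l)}$) to its $V$-conjugate, which is a legitimate restriction of the $V$-conjugate of the corresponding operator over a slightly shrunk region, preserves GNVW indices, and leaves the final scalar $\omega$ untouched. Henceforth write $U_g$ for $\tilde U_g$ and rephase so that $U_gU_h=U_{gh}$ exactly — possible since $U_g\ket0=\ket0$ kills the projective phase.

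The workhorse is the following lemma: if a FDQC $W$ fixes a product state $\ket0$, then for any region its restriction can be chosen to fix $\ket0$. Indeed, $W^Y$ acts like $W$ on $\text{Int}(Y)$ (where $W\ket0=\ket0$) and like the identity on $\text{Ext}(Y)$, so — modulo the finite-size and tail subtleties flagged in the text — $W^Y\ket0$ agrees with $\ket0$ everywhere except on a neighborhood $N$ of $\partial Y$, where it equals some pure state $\ket{\chi}_N$; since $W^Y\ket0$ is a bounded-depth circuit applied to $\ket0$, $\ket{\chi}_N$ is itself a short-range-entangled state of the (lower-dimensional) strip $N$, so some circuit $\Sig$ supported near $N$ satisfies $\Sig\ket{\chi}_N=\ket0_N$, and $\Sig W^Y$ is still a legitimate restriction (Section~\ref{sec:background}) that fixes $\ket0$; when $Y$ is a single point, $N=\varnothing$ and the role of $\Sig$ is played by the $U(1)$ phase freedom in the restriction. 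Now iterate: choose $U^A_g$ fixing $\ket0$, so $\Om_{g,h}=U^A_gU^A_h(U^A_{gh})^{-1}$ fixes $\ket0$; take the ancillary chain $\cH_{\partial A}$ in its uniform all-zeros product state $\ket0_{\partial A}$, fixed by each $T_{g,h}$ since $T_{g,h}$ is a product of uniform translations, so $\Om'_{g,h}=\Om_{g,h}\otimes T_{g,h}$ fixes the product state $\ket{0}'\equiv\ket0\otimes\ket0_{\partial A}$; then choose the restriction $\Om^I_{g,h}$ to fix $\ket{0}'$; then every factor of $\Ga_{g,h,k}=\Om^I_{g,h}\Om^I_{gh,k}\big(\hp{}^g\Om^I_{h,k}\Om^I_{g,hk}\big)^{-1}$ fixes $\ket{0}'$ (e.g. $\hp{}^g\Om^I_{h,k}=U^A_g\Om^I_{h,k}(U^A_g)^{-1}$ does, being a conjugate of a $\ket{0}'$-fixer by a $\ket{0}'$-fixer), so $\Ga_{g,h,k}\ket{0}'=\ket{0}'$; since $\Ga_{g,h,k}$ factorizes across the widely separated points $a$ and $b$, its near-$a$ factor — which is $\Ga^a_{g,h,k}$ up to phase — acts on $\ket{0}'$ as a scalar, and we set that scalar to $1$ using the phase freedom, so $\Ga^a_{g,h,k}\ket{0}'=\ket{0}'$.

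It remains to treat $\Delta^a_{g,h,(k,l)}=\hp{}^{(g,h)\cdot gh}\Om^L_{k,l}\big(\hp{}^{g\cdot h}\Om^L_{k,l}\big)^{-1}$, which carries no phase freedom. Because $\Delta^a$ does not depend on the decomposition $\Om^I_{k,l}=\Om^L_{k,l}\Om^R_{k,l}$, it suffices to exhibit one decomposition in which the left factor $\Om^L_{k,l}$ itself fixes $\ket{0}'$ — a further instance of the splitting/restriction lemma — after which $\Delta^a$, a product of conjugates of $\Om^L_{k,l}$ by the operators $\Om^I_{g,h}U^A_{gh}$ and $U^A_gU^A_h$ (both of which fix $\ket{0}'$), fixes $\ket{0}'$ as well. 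At this point every operator appearing in formula~(\ref{eq:omega}) fixes $\ket{0}'$: the bare $\Ga^a$'s and $\Delta^a$'s do by construction, and the conjugated ones $\hp{}^g\Ga^a$, $\hp{}^{^g(h,k)}\Ga^a$, $\hp{}^{(g,h)}\Ga^a$, $\hp{}^{^{g\cdot h}(k,l)}\Ga^a$ are conjugates of $\ket{0}'$-fixers by products of $U^A$'s and $\Om^I$'s, hence fix $\ket{0}'$ too. Applying both sides of~(\ref{eq:omega}) to $\ket{0}'$ therefore yields $\omega(g,h,k,l)\ket{0}'=\ket{0}'$, i.e. $\omega\equiv1$, and by Section~\ref{sec:ambiguity} we conclude $[\omega]=0$.

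The main obstacle is making the key lemma precise — especially at the $\Delta^a$ stage, where there is no phase to absorb and $\Om^L_{k,l}$ is only one half of a decomposition rather than a symmetry restriction — together with the finite-size and tail issues acknowledged in the text: that the restriction of a short-range-entangled state to a boundary strip is again short-range entangled in one lower dimension, and that the resulting deviation from the product state is localized near $\partial Y$ and cancellable by an allowed modification of the restriction. Once the lemma is established in the appropriate ``with tails'' setting, the remainder of the argument is the bookkeeping above.
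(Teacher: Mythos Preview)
Your strategy is essentially the paper's: reduce to a state fixed by the symmetry, then iterate the ``restrictions can be chosen to fix the state'' lemma through the levels $U^A_g\to\Om^I_{g,h}\to\Ga^a_{g,h,k}$ (and $\Om^L$ for $\Delta^a$), so that every operator in~(\ref{eq:omega}) fixes the state and $\omega\equiv1$. The paper organizes this into three lemmas (for the $2$D, $1$D, and $0$D restrictions) rather than a single workhorse, and reduces to the product-state case inside each lemma rather than by conjugating the whole symmetry by $V$, but these are cosmetic differences.

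The one substantive gap is precisely where you flag it: your sentence ``since $W^Y\ket0$ is a bounded-depth circuit applied to $\ket0$, $\ket\chi_N$ is itself a short-range-entangled state of the (lower-dimensional) strip $N$'' is not automatic. A $2$D FDQC applied to a product state gives a $2$D SRE state, but the fact that this state happens to factorize with one tensor factor supported on a thin strip does \emph{not} immediately imply that factor is trivializable by a quasi-$1$D FDQC on the strip; the $2$D disentangler $(W^Y)^{-1}$ has gates straddling $\partial N$ and cannot simply be restricted. The paper's Lemma~\ref{lemma1} supplies the missing idea: restrict $U$ instead to an \emph{annulus} $A\setminus B$ (agreeing with $U^A$ along $\partial A$), so that $U^{A\setminus B}\ket0$ carries \emph{two} boundary states $\ket\phi_{\boldsymbol{\partial A}}$ and $\ket\phi_{\boldsymbol{\partial B}}$; the inverse circuit $(U^{A\setminus B})^{-1}$, viewed as a quasi-$1$D FDQC on the combined annulus, exhibits these as mutual inverses, so $\ket\phi_{\boldsymbol{\partial A}}$ is \emph{invertible} as a $1$D state, and only then does the $1$D theorem (invertible $\Rightarrow$ SRE in bosonic $1$D) give the FDQC $\Sigma$ you need. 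Without this annulus/invertibility step your lemma is an assertion, not an argument; once you have it, the rest of your bookkeeping goes through exactly as written.
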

\begin{proof}
    The proof follows from the three lemmas below. First, consider the case where $\ket{\psi}$ is not only invertible, but is also SRE.
    
    To begin, consider a restriction $U^A_g$ of $U_g$ to the disk $A$. By Lemma~\ref{lemma1}, for each $g\in G$ there exists a FDQC $\Sigma_g$ supported near $\partial A$ such that $\Sigma_gU^A_g\ket{\psi}=\ket{\psi}$. Let us redefine $U^A_g\to \Sigma_gU^A_g$, such that after the redefinition, $U^A_g\ket{\psi}=\ket{\psi}$. Since $\Om_{g,h}=U^A_gU^A_h(U^A_{gh})^{-1}$, it follows that $\Om_{g,h}\ket{\psi}=\ket{\psi}$.

    Next, we define the expanded Hilbert space $\mathcal{H}'=\mathcal{H}\otimes\mathcal{H}_{\partial A}$ and the operator $\Om'_{g,h}=\Om_{g,h}\otimes T_{g,h}$ as in step 4 of our anomaly computation procedure. Moreover, we define a state $\ket{\psi'}=\ket{\psi}\otimes\ket{\psi_{\partial A}}$ where $\ket{\psi_{\partial A}}$ is the 1D translation invariant product state from step 4. Since $\ket{\psi_{\partial A}}$ is translation invariant, it follows that $T_{g,h}\ket{\psi_{\partial A}}=\ket{\psi_{\partial A}}$, and hence $\Om'_{g,h}\ket{\psi'}=\ket{\psi'}$. We now consider a restriction $\Om^I_{g,h}$ of $\Om'_{g,h}$ to the interval $I=[a,b]\in\partial A$. By Lemma~\ref{lemma2}, for each pair $g,h\in G$ there exists a FDQC $\La_{g,h}$ supported near points $a$ and $b$ such that $\La_{g,h}\Om^I_{g,h}\ket{\psi'}=\ket{\psi'}$. Let us redefine $\Om^I_{g,h}\to\La_{g,h}\Om^I_{g,h}$, such that $\Om^I_{g,h}\ket{\psi'}=\ket{\psi'}$. By definitions (\ref{eq:Gamma_def}) and (\ref{eq:Delta}), it follows that $\Ga_{g,h,k}\ket{\psi'}=\Delta_{g,h,(k,l)}\ket{\psi'}=\ket{\psi'}$.
    
    Since $\Om^I_{g,h}\ket{\psi'}=\ket{\psi'}$, it follows by Lemma~\ref{lemma2} that there exists a decomposition $\Om^I_{g,h}=\Om^L_{g,h}\Om^R_{g,h}$, where $\Om^L_{g,h}$ ($\Om^R_{g,h}$) is supported near the interval $L$ ($R$), such that $\Om^L_{g,h}\ket{\psi'}=\ket{\psi'}$. By definition (\ref{eq:Delta_a_def}), it follows that $\Delta^a_{g,h,(k,l)}\ket{\psi'}=\ket{\psi'}$.
    
    By Lemma~\ref{lemma3}, there exists a restriction $\Ga^a_{g,h,k}$ of $\Ga_{g,h,k}$ to the vicinity of $a$ such that $\Ga^a_{g,h,k}\ket{\psi'}=\ket{\psi'}$. Altogether, by definition (\ref{eq:omega}), it follows that $\omega(g,h,k,l)\ket{\psi'}=\ket{\psi'}$. Hence $\omega(g,h,k,l)=1$, and the anomaly index $[\omega]$ is trivial.

    Thus far, we have shown that the theorem holds whenever $\ket{\psi}$ is SRE. We now claim that the general case, in which $\ket{\psi}$ is merely invertible, immediately follows. To see this, consider the SRE state $\ket{\psi}_\text{SRE}=\ket{\psi}\otimes\ket{\phi}$ which is guaranteed to exist by invertibility of $\ket{\psi}$. Clearly $\ket{\psi}_\text{SRE}$ is symmetric under the $G$-symmetry $\{U_g\otimes\mathbbm{1}\}$. Therefore, $\{U_g\otimes\mathbbm{1}\}$ has trivial anomaly index, hence $\{U_g\}$ must also have trivial anomaly index.
\end{proof}

\begin{figure}
    \centering
    \begin{tikzpicture}
        \filldraw[thick,dashed,black,fill=red!15](0,0) circle (3.6);
        \draw[thick,black](-.4177,-3.1726) arc(262.5:-82.5:3.2);
        \filldraw[thick,dashed,black,fill=blue!15](0,0) circle (2.8);
        \draw[thick,black](-.4168,-2.3635) arc(260:-80:2.4);

        \filldraw[thick,dashed,black,fill=white](0,0) circle (2);
        
        \node at (0,3){$A$};
        \node at (0,2.2){$B$};
        \node at (0,-1.7){$\text{Int}$};
        \node at (0,-2.4){$\boldsymbol{\partial B}$};
        \node at (0,-3.2){$\boldsymbol{\partial A}$};
        \node at (0,-3.9){$\text{Ext}$};
    \end{tikzpicture}
    \caption{Geometry used in the proof of Lemma~\ref{lemma1}. Here, $A$ is the disk whose boundary is the outer solid line, and $B$ is the disk whose boundary is the inner solid line. The region $\text{Ext}$ lies outside the outermost dashed circle, and the region $\text{Int}$ lies inside the innermost dashed circle. The annulus $\boldsymbol{\partial A}$ lies between the middle and outer dashed lines (shaded red region), whereas the annulus $\boldsymbol{\partial B}$ lies between the inner and middle dashed lines (shaded blue region). Note that we use bold font to distinguish between the boundary curves $\partial A$ and $\partial B$, and the surrounding annuli $\boldsymbol{\partial A}$ and $\boldsymbol{\partial B}$.}
    \label{fig:annulus}
\end{figure}
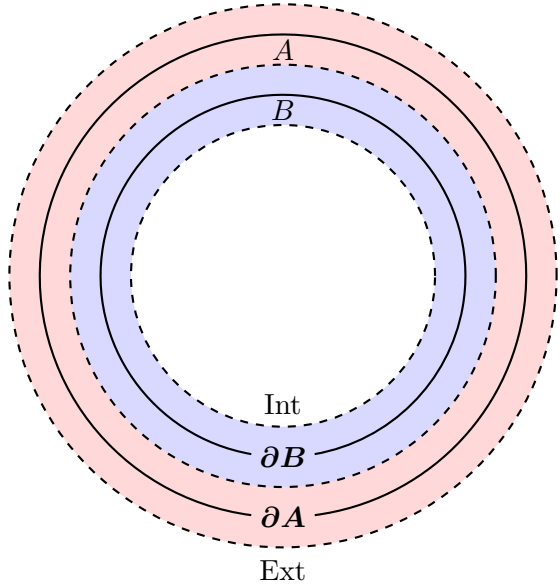

\begin{lem}
    \label{lemma1}
    Given a 2D SRE state $\ket{\psi}$, a FDQC $U$ such that $U\ket{\psi}=\ket{\psi}$, and a restriction $U^A$ of $U$ to a disk $A$, there exists a FDQC $\Sigma$ supported near $\partial A$ such that $\Sigma U^A\ket{\psi}=\ket{\psi}$.\footnote{Again, we will ignore the need to tensor with ancillas to trivialize SRE states.}${}^{,}$\footnote{We thank Michael Levin for insight on the proof of this lemma.}
\end{lem}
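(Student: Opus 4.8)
The plan is to exploit the fact that $\ket{\psi}$ is SRE to trivialize it by a FDQC, reduce to the case of a product state, and then use the geometry of Fig.~\ref{fig:annulus} to localize the defect of $U^A$ on $\ket{\psi}$ to the annulus $\boldsymbol{\partial A}$. First I would apply a FDQC $W$ with $W\ket{\psi}=\ket{\psi_0}$ a product state, and conjugate everything: replacing $U$ by $WUW^{-1}$ and $U^A$ by $WU^AW^{-1}$, the hypotheses are preserved (the conjugate of a FDQC is a FDQC, $WU^AW^{-1}$ is still a restriction of $WUW^{-1}$ to a slightly enlarged disk, and $WUW^{-1}\ket{\psi_0}=\ket{\psi_0}$), so it suffices to prove the lemma when $\ket{\psi}=\ket{\psi_0}$ is a product state. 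Hence the key object becomes the state $U^A\ket{\psi_0}$: I want to show it differs from $\ket{\psi_0}$ by a FDQC supported near $\partial A$.

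The core step uses the nested disks $B\subset A$ of Fig.~\ref{fig:annulus}. Since $U$ fixes $\ket{\psi_0}$ and $U^A$ agrees with $U$ on $\mathrm{Int}(A)\supset B$ and with the identity on $\mathrm{Ext}(A)$, the state $\ket{\chi}\equiv U^A\ket{\psi_0}$ is a product state on $B$ (because there $U^A\ket{\psi_0}=U\ket{\psi_0}=\ket{\psi_0}$ locally — more precisely, reduced density matrices in $\mathrm{Int}(B)$ match those of $\ket{\psi_0}$) and a product state on $\mathrm{Ext}(A)$ (where $U^A$ acts trivially). I would make this precise by noting that the reduced density matrix of $\ket{\chi}$ on any region in $\mathrm{Int}(B)$ or in $\mathrm{Ext}(A)$ coincides with that of the product state $\ket{\psi_0}$. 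Thus $\ket{\chi}$ is SRE and, in fact, can be disentangled to a product state by a FDQC whose gates need only act in the annular region $\boldsymbol{\partial A}\cup\boldsymbol{\partial B}$ (the strip between $\mathrm{Int}(B)$ and $\mathrm{Ext}(A)$): locally, deep inside $B$ or deep outside $A$, $\ket{\chi}$ is already a product state, so no gates are needed there. Call this disentangling circuit $\Sigma'$, so $\Sigma' U^A\ket{\psi_0}$ is a product state; by adjusting $\Sigma'$ by a further product-state-to-product-state FDQC along the annulus (which exists since any two product states on the same Hilbert space are related by a depth-one circuit of on-site unitaries) I can arrange $\Sigma' U^A\ket{\psi_0}=\ket{\psi_0}$. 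Setting $\Sigma=\Sigma'$ gives the claim in the product-state case, and undoing the conjugation by $W$ (i.e.\ taking $\Sigma\to W^{-1}\Sigma W$, which is still a FDQC supported near $\partial A$ since $W$ has bounded range) yields the general statement.

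The main obstacle is justifying that a state which is locally a product state both well inside $B$ and well outside $A$ can be disentangled by a circuit \emph{localized to the annulus}, rather than an unstructured global FDQC. This is essentially the statement that short-range entanglement of a state restricted to an annular region can be removed by an annulus-supported circuit — intuitively clear because there is no obstruction (no nontrivial 2D SPT or topological order in an annulus geometry once the bulk regions on both sides are already trivial), but it requires care with the width of the annulus relative to the range $R$ and the depth of the circuits, and with the finite-size/boundary subtleties flagged in Sec.~\ref{sec:obstruction}. I would handle it by a layer-by-layer argument: peel off the gates of a global disentangling circuit for $\ket{\chi}$, observing that gates acting entirely within $\mathrm{Int}(B)$ or $\mathrm{Ext}(A)$ act trivially (up to phase) because $\ket{\chi}$ is already a product state there, so only the gates touching the annulus survive, and these compose to the desired $\Sigma$ supported in a neighborhood of $\partial A$ of width controlled by $R$ and the circuit depth.
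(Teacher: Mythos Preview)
Your reduction to the product-state case by conjugating with $W$ is exactly what the paper does, and your identification of the problem---showing that the state $\ket{\chi}=U^A\ket{\psi_0}$, which is a product state away from a thin annulus, can be disentangled by a FDQC supported on that annulus---is also on target. The gap is in how you propose to resolve this.

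The ``layer-by-layer peeling'' argument does not work. Even if $\ket{\chi}$ restricts to a product state on $\mathrm{Int}(B)$, a gate of a global disentangling circuit supported entirely inside $\mathrm{Int}(B)$ has no reason to act trivially on $\ket{\chi}$: the circuit could entangle and later disentangle that region, or simply permute the local product factors in a way that is undone by later layers. Concretely, the natural candidate $(U^A)^{-1}$ has, inside $\mathrm{Int}(A)$, the gates of $U^{-1}$; individually these do not fix $\ket{\psi_0}$---only their full composition does. So you cannot discard interior gates layer by layer, and you are left with no argument that the residual state on the annulus is 1D SRE.

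The paper supplies the missing idea: instead of working only with $U^A$, it introduces a restriction $U^{A\setminus B}$ of $U$ to the annulus $A\setminus B$ that agrees with $U^A$ along $\partial A$. Applying $U^{A\setminus B}$ to $\ket{\psi_0}$ produces a state that factorizes as $\ket{\psi}_{\mathrm{Ext}}\otimes\ket{\phi}_{\boldsymbol{\partial A}}\otimes\ket{\phi}_{\boldsymbol{\partial B}}\otimes\ket{\psi}_{\mathrm{Int}}$, with the \emph{same} $\ket{\phi}_{\boldsymbol{\partial A}}$ as in $U^A\ket{\psi_0}$. Since $(U^{A\setminus B})^{-1}$ is a FDQC supported on the annulus, it witnesses that $\ket{\phi}_{\boldsymbol{\partial A}}\otimes\ket{\phi}_{\boldsymbol{\partial B}}$ is trivial as a 1D state, i.e.\ $\ket{\phi}_{\boldsymbol{\partial A}}$ is 1D \emph{invertible}. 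The nontrivial input is then the result of \cite{KapustinYangSopenko} that every 1D bosonic invertible state is SRE; this yields the desired FDQC $\Sigma$ on $\boldsymbol{\partial A}$. Your sketch never establishes invertibility and never invokes this 1D classification, which is precisely the content you need.
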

\begin{proof}
    This proof uses the fact that every 1D invertible state is SRE, as shown in \cite{KapustinYangSopenko}. First, consider the case where $\ket{\psi}$ is a product state $\ket{\psi}=\bigotimes_i\ket{\psi}_i$.

    Consider the geometry shown in Fig.~\ref{fig:annulus}. The basic idea of the proof is to decompose the total Hilbert space as a tensor product over the four regions $\text{Ext}$, $\boldsymbol{\partial A}$, $\boldsymbol{\partial B}$, and $\text{Int}$. That is,
    \begin{equation}
        \mathcal{H}=\mathcal{H}_\text{Ext}\otimes\mathcal{H}_{\boldsymbol{\partial A}}\otimes\mathcal{H}_{\boldsymbol{\partial B}}\otimes\mathcal{H}_\text{Int}.
    \end{equation}
    For any region $R$, define the state $\ket{\psi}_R=\bigotimes_{i\in R}\ket{\psi}_i$. Now consider a restriction $U^{A\setminus B}$ of $U$ to the annulus $A\setminus B$, which is identical to $U^A$ along the boundary of $A$.\footnote{More precisely, we require that $U^{A\setminus B}(U^A)^{-1}$ is supported near $B$.} The state $\ket{\psi'}=U^{A\setminus B}\ket{\psi}$ looks like $\ket{\psi}$ deep within disk $B$, deep within the annulus $A\setminus B$, and far outside disk $A$. Therefore, we may decompose $\ket{\psi'}$ as a tensor product
    \begin{equation}
        \ket{\psi'}=\ket{\psi}_\text{Ext}\otimes\ket{\phi}_{\boldsymbol{\partial A}}\otimes\ket{\phi}_{\boldsymbol{\partial B}}\otimes\ket{\psi}_\text{Int}
    \end{equation}
    where $\ket{\psi}_\text{Ext}=\bigotimes_{i\in \text{Ext}}\ket{\psi}_i$ and $\ket{\psi}_\text{Int}=\bigotimes_{i\in\text{Int}}\ket{\psi}_i$.

    The next step is to view regions $\boldsymbol{\partial A}$ and $\boldsymbol{\partial B}$ as standalone 1D systems, and $\ket{\phi}_{\boldsymbol{\partial A}}$ and $\ket{\phi}_{\boldsymbol{\partial B}}$ as 1D states living therein. By definition,
    \begin{equation}
        (U^{A\setminus B})^{-1}(\ket{\phi}_{\boldsymbol{\partial A}}\otimes\ket{\phi}_{\boldsymbol{\partial B}})=\ket{\psi}_{\boldsymbol{\partial A}\cup\boldsymbol{\partial B}}.
    \end{equation}
    Therefore, $\ket{\phi}_{\boldsymbol{\partial A}}$ is a 1D invertible state, hence it is SRE. Consequently, there exists a FDQC unitary $\Sigma$, supported near $\boldsymbol{\partial A}$, such that $\Sigma\ket{\phi}_{\boldsymbol{\partial A}}=\ket{\psi}_{\boldsymbol{\partial A}}$. Since
    \begin{equation}
        U^A\ket{\psi}=\ket{\psi}_\text{Ext}\otimes\ket{\phi}_{\boldsymbol{\partial A}}\otimes\ket{\psi}_{\boldsymbol{\partial B}}\otimes\ket{\psi}_\text{Int},
    \end{equation}
    it follows that $\Sigma U^A\ket{\psi}=\ket{\psi}$, as desired.
    
    Thus far, we have shown that this lemma holds if $\ket{\psi}$ is a product state. If, more generally, $\ket{\psi}$ is an arbitrary SRE state, then there exists a FDQC $V$ such that $V\ket{\psi}=\ket{\psi_0}$ where $\ket{\psi_0}$ is a product state. Thus $U'\ket{\psi_0}=\ket{\psi_0}$ where $U'=VUV^\dagger$. The operator $VU^AV^\dagger$ is a restriction of $U'$ to region $A$, hence there exists an FDQC $\Sigma'$ such that $\Sigma'VU^AV^\dagger\ket{\psi_0}=\ket{\psi_0}$. Thus, $\Sigma U^A\ket{\psi}=\ket{\psi}$ where $\Sigma=V^\dagger\Sigma'V$.
\end{proof}
\begin{lem}
    \label{lemma2}
    Given a 2D SRE state $\ket{\psi}$, a 1D FDQC $\Om$ supported near a closed curve $\gamma$ such that $\Om\ket{\psi}=\ket{\psi}$, and a restriction $\Omega^I$ of $\Om$ to an interval $I=[a,b]\in\gamma$, there exists a 1D FDQC $\Lambda$, supported in the vicinity of points $a$ and $b$, such that $\Lambda \Om^I\ket{\psi}=\ket{\psi}$.
\end{lem}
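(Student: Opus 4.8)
The plan is to replay the proof of Lemma~\ref{lemma1} one spatial dimension lower: where that argument used the fact that every 1D invertible state is SRE \cite{KapustinYangSopenko}, here the analogous input is the trivial statement that any pure state on a finite-dimensional Hilbert space is carried to a fixed product state by a single unitary gate. (One could also introduce a nested sub-interval $J\subset I$ in direct analogy with the nested disk $B$ of Lemma~\ref{lemma1}, but since the relevant boundary ``blobs'' are now $0$-dimensional, no such device is needed.) As in Lemma~\ref{lemma1}, I would first treat the case that $\ket{\psi}$ is a product state and then reduce the general SRE case to it: picking a FDQC $V$ with $V\ket{\psi}=\ket{\psi_0}$ a product state, the data $\Om_0\equiv V\Om V^{-1}$ and $\Om_0^I\equiv V\Om^I V^{-1}$ satisfy all the same hypotheses, and a solution $\La_0$ for the primed data yields $\La\equiv V^{-1}\La_0 V$, still supported near $a$ and $b$.

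So assume $\ket{\psi}=\bigotimes_i\ket{\psi}_i$. Because $\Om^I$ conjugates operators deep inside $I$ exactly as $\Om$ does and operators far outside $I$ trivially, and because $\Om\ket{\psi}=\ket{\psi}$ with $\ket{\psi}$ a product state, the reduced density matrix of $\Om^I\ket{\psi}$ coincides with that of $\ket{\psi}$ on every region disjoint from two bounded neighborhoods $X_a$, $X_b$ of the endpoints, with radii controlled by the range of $\Om^I$. Since $I$ is long compared to $R$, the regions $X_a$ and $X_b$ are far apart. Writing $Z=(X_a\cup X_b)^c$, the reduced state $\rho_Z(\Om^I\ket{\psi})=\rho_Z(\ket{\psi})$ is pure, so the pure global state factorizes as $\Om^I\ket{\psi}=\ket{\psi}_Z\otimes\ket{\chi}_{X_a\cup X_b}$. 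Moreover $\Om^I\ket{\psi}$, being a finite-depth circuit applied to a product state, carries no correlations between $X_a$ and $X_b$, hence $\rho_{X_a\cup X_b}(\Om^I\ket{\psi})=\rho_{X_a}\otimes\rho_{X_b}$; together with the fact that $\rho_{X_a\cup X_b}(\Om^I\ket{\psi})=\ket{\chi}\bra{\chi}$ is pure, this forces both $\rho_{X_a}$ and $\rho_{X_b}$ to be pure, i.e. $\ket{\chi}=\ket{\chi_a}_{X_a}\otimes\ket{\chi_b}_{X_b}$. Since $\mathcal{H}_{X_a}$ and $\mathcal{H}_{X_b}$ are finite-dimensional there are unitaries $\La_a$ on $X_a$ and $\La_b$ on $X_b$ with $\La_a\ket{\chi_a}=\bigotimes_{i\in X_a}\ket{\psi}_i$ and $\La_b\ket{\chi_b}=\bigotimes_{i\in X_b}\ket{\psi}_i$, and then $\La\equiv\La_a\La_b$ is a FDQC supported near $a$ and $b$ with $\La\Om^I\ket{\psi}=\ket{\psi}$.

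The main obstacle is the one already flagged in this section: with strict locality the two factorizations above are exact and the argument is clean, but for operators carrying exponentially decaying tails they hold only up to exponentially small corrections, and one must argue that these corrections --- and the associated finite-size effects --- can be absorbed without spoiling the conclusion. Everything else (the reduction to a product state and the explicit gate construction) is routine.
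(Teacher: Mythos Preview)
Your proposal is correct and follows essentially the same route as the paper: reduce to the product-state case, factorize $\Om^I\ket{\psi}$ as $\ket{\psi}_C\otimes\ket{\phi}_{\boldsymbol a}\otimes\ket{\phi}_{\boldsymbol b}$ with $C$ the complement of small disks around $a$ and $b$, and then pick local unitaries $\La^a,\La^b$ rotating the blobs back to the product state. The paper simply asserts the three-region tensor decomposition directly, whereas you spell out the reduced-density-matrix argument for it; and your caveat about tails is exactly the one the paper flags at the start of Sec.~\ref{sec:obstruction}.
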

\begin{proof}
    First, consider the case where $\ket{\psi}$ is a product state $\ket{\psi}=\bigotimes_i\ket{\psi}_i$. Then $\ket{\psi'}=\Om^I\ket{\psi}$ can be decomposed as a tensor product over three regions: \begin{equation}
        \ket{\psi'}=\ket{\phi}_{\boldsymbol{a}}\otimes\ket{\phi}_{\boldsymbol{b}}\otimes\ket{\psi}_C
    \end{equation}
    where $\boldsymbol{a}$ ($\boldsymbol{b}$) is a small disk around point $a$ ($b$), and $C$ is the complement of these two disks. Here $\ket{\psi}_C=\bigotimes_{i\in C}\ket{\psi}_i$. Let $\La^a$ be a unitary operator supported in $\boldsymbol{a}$ such that $\La^a\ket{\phi}_{\boldsymbol{a}}=\ket{\psi}_{\boldsymbol{a}}=\bigotimes_{i\in\boldsymbol{a}}\ket{\psi}_i$, and similarly for $\La^b$. Setting $\La=\La^a\La^b$, we have that $\La\Om^I\ket{\psi}=\ket{\psi}$ as desired.

    We have shown that the lemma holds if $\ket{\psi}$ is a product state. If $\ket{\psi}$ is an arbitrary SRE state, it holds via the same reasoning as in the proof of the previous lemma. Physically, this lemma states that a gapped parent Hamiltonian for $\ket{\psi}$ cannot support nontrivial anyonic excitations.
\end{proof}
\begin{lem}
    \label{lemma3}
    Given a 2D SRE state $\ket{\psi}$, a FDQC $\Ga$ supported near a pair of well-separated points $\{a,b\}$ such that $\Ga\ket{\psi}=\ket{\psi}$, and a restriction $\Ga^a$ of $\Ga$ to the vicinity of $a$, there exists a $U(1)$ phase $\rho$ such that $\rho\Ga^a\ket{\psi}=\ket{\psi}$.
\end{lem}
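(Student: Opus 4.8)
The plan is to follow the two-step template of the proofs of Lemmas~\ref{lemma1} and~\ref{lemma2}: first handle the case in which $\ket{\psi}$ is a product state, then reduce the general SRE case to it by conjugating with a trivializing FDQC.

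So suppose $\ket{\psi}=\bigotimes_i\ket{\psi}_i$, and let $\boldsymbol{a}$ and $\boldsymbol{b}$ be small disjoint disks around the points $a$ and $b$, with $D$ the complement of $\boldsymbol{a}\cup\boldsymbol{b}$. Since $a$ and $b$ are well-separated, every gate of $\Ga$ is supported near one of the two points, so $\Ga$ factorizes (up to an overall phase) as $\Ga=\Ga^a\Ga^b$, where $\Ga^a$ is a restriction of $\Ga$ to the vicinity of $a$ and $\Ga^b$ a restriction to the vicinity of $b$; these two operators commute because their supports are disjoint. Any other choice of restriction of $\Ga$ to the vicinity of $a$ differs from this particular $\Ga^a$ by a $U(1)$ phase, which merely rescales the $\rho$ we obtain, so there is no loss in working with it. From $\Ga\ket{\psi}=\ket{\psi}$ together with the factorization, we get $\Ga^a\ket{\psi}=c\,(\Ga^b)^{-1}\ket{\psi}$ for some $U(1)$ phase $c$.

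Now $\Ga^a\ket{\psi}=\ket{\phi}_{\boldsymbol{a}}\otimes\ket{\psi}_{\boldsymbol{b}}\otimes\ket{\psi}_D$ with $\ket{\phi}_{\boldsymbol{a}}=\Ga^a\ket{\psi}_{\boldsymbol{a}}$, whereas $(\Ga^b)^{-1}\ket{\psi}=\ket{\psi}_{\boldsymbol{a}}\otimes\ket{\chi}_{\boldsymbol{b}}\otimes\ket{\psi}_D$ with $\ket{\chi}_{\boldsymbol{b}}=(\Ga^b)^{-1}\ket{\psi}_{\boldsymbol{b}}$. Equating the two expressions and comparing reduced density matrices on the disk $\boldsymbol{a}$ forces $\ket{\phi}_{\boldsymbol{a}}\bra{\phi}_{\boldsymbol{a}}=\ket{\psi}_{\boldsymbol{a}}\bra{\psi}_{\boldsymbol{a}}$, hence $\ket{\phi}_{\boldsymbol{a}}=e^{i\theta}\ket{\psi}_{\boldsymbol{a}}$ for some phase $\theta$. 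Therefore $\Ga^a\ket{\psi}=e^{i\theta}\ket{\psi}$, and we set $\rho=e^{-i\theta}$.

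For a general SRE state $\ket{\psi}$, choose a FDQC $V$ with $V\ket{\psi}=\ket{\psi_0}$ a product state; then $V\Ga V^\dagger$ fixes $\ket{\psi_0}$ and $V\Ga^a V^\dagger$ is a restriction of $V\Ga V^\dagger$ to the vicinity of $a$, so by the product-state case there is a phase $\rho$ with $\rho\,V\Ga^a V^\dagger\ket{\psi_0}=\ket{\psi_0}$, and conjugating back gives $\rho\,\Ga^a\ket{\psi}=\ket{\psi}$. The one point that deserves care --- and the step I expect to be the main obstacle in the ``with tails'' setting anticipated in Sec.~\ref{sec:obstruction} --- is the clean factorization $\Ga=\Ga^a\Ga^b$ across the two well-separated regions: with exponential tails it holds only approximately, and one must control the error (for instance by taking $\boldsymbol{a}$ and $\boldsymbol{b}$ large compared to the decay length and absorbing the residual mismatch) before the tensor-factorization argument applies verbatim. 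Physically, this lemma expresses that a gapped parent Hamiltonian for $\ket{\psi}$ carries no nontrivial $0$D ``charge'' localized at an isolated point---only an overall phase.
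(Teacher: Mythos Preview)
Your proof is correct and follows essentially the same route as the paper: split into regions around $a$ and $b$, use the factorization $\Ga=\Ga^a\Ga^b$ together with $\Ga\ket{\psi}=\ket{\psi}$ to force $\Ga^a\ket{\psi}$ to be a phase times $\ket{\psi}$, then handle the general SRE case by conjugating with a trivializing FDQC. If anything, you are more explicit than the paper---the paper simply asserts the existence of $\rho$ with $\rho\Ga^a\ket{\psi_1}=\ket{\psi_1}$, whereas you spell out the tensor-factorization/reduced-density-matrix reasoning behind it.
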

\begin{proof}
    Again, consider the case where $\ket{\psi}$ is a product state. Divide the system into two parts, region 1 containing point $a$ and region 2 containing point $b$. Clearly $\ket{\psi'}=\Gamma\ket{\psi}$ can be decomposed as a tensor product state $\ket{\psi'}=\ket{\psi_1}\otimes\ket{\psi_2}$. Since there exists some $\rho\in U(1)$ such that $\rho\Ga^a\ket{\psi_1}=\ket{\psi_1}$, it follows that $\rho\Ga^a\ket{\psi}=\ket{\psi}$.

    We have shown that the lemma holds if $\ket{\psi}$ is a product state. If $\ket{\psi}$ is an arbitrary SRE state, it holds via the same reasoning as in the proof of the previous lemmas.
\end{proof}

\section {Example: Anomalous $\mathbb{Z}_2\times\mathbb{Z}_2\times\mathbb{Z}_2\times\mathbb{Z}_2$ symmetry}
\label{sec:example}

In this section, we illustrate our anomaly computation procedure via a simple example. Consider a system composed of a single qubit on each site $i$ of a triangular lattice, which is divided into three sublattices labeled $\Lambda_1,\Lambda_2,\Lambda_3$. Denote the Pauli operators on site $i$ by $X_i$ and $Z_i$. We will compute the anomaly of a particular $G=\mathbb{Z}_2\times\mathbb{Z}_2\times\mathbb{Z}_2\times\mathbb{Z}_2$ symmetry on this system. Let us denote each group element by a 4-tuple $\mathbf{g}=(g_1,g_2,g_3,g_4)$ with $g_j\in\{0,1\}$. The symmetry generators have the following form \cite{Yoshida}:
\begin{align}
    U_{(1,0,0,0)}&=\prod_{i\in\Lambda_1}X_i,\label{eq:Z2_1}\\
    U_{(0,1,0,0)}&=\prod_{i\in\Lambda_2}X_i,\\
    U_{(0,0,1,0)}&=\prod_{i\in\Lambda_3}X_i,\label{eq:Z2_3}\\
    U_{(0,0,0,1)}&=\prod_{\langle ijk\rangle}CCZ_{ijk}.\label{eq:Z2_4}
\end{align}
Here, $\langle ijk\rangle$ indexes the set of all elementary triangles of the lattice (each elementary triangle contains one vertex belonging to each of the $\Lambda_1,\Lambda_2,\Lambda_3$ sublattices), and the operator
$CCZ_{ijk}=(-1)^{(1-Z_i)(1-Z_j)(1-Z_k)/8}$ is the controlled-controlled-$Z$ gate acting on sites $i,j,k$. To verify that these generators mutually commute, recall the following relations:
\begin{equation}
\begin{split}
    X_iCCZ_{ijk}&=CZ_{jk}CCZ_{ijk}X_i,\\
    X_jCCZ_{ijk}&=CZ_{ki}CCZ_{ijk}X_j,\\
    X_kCCZ_{ijk}&=CZ_{ij}CCZ_{ijk}X_k,
\end{split}
\end{equation}
where $CZ_{ij}=(-1)^{(1-Z_i)(1-Z_j)/4}$ is the controlled-$Z$ gate acting on sites $i,j$.
The symmetry $U_{(0,0,0,1)}$ squares to the identity because the $CCZ$ gates individually square to the identity and commute with one another.

\begin{figure}[t]
\centering
\includegraphics[width=.35\columnwidth]{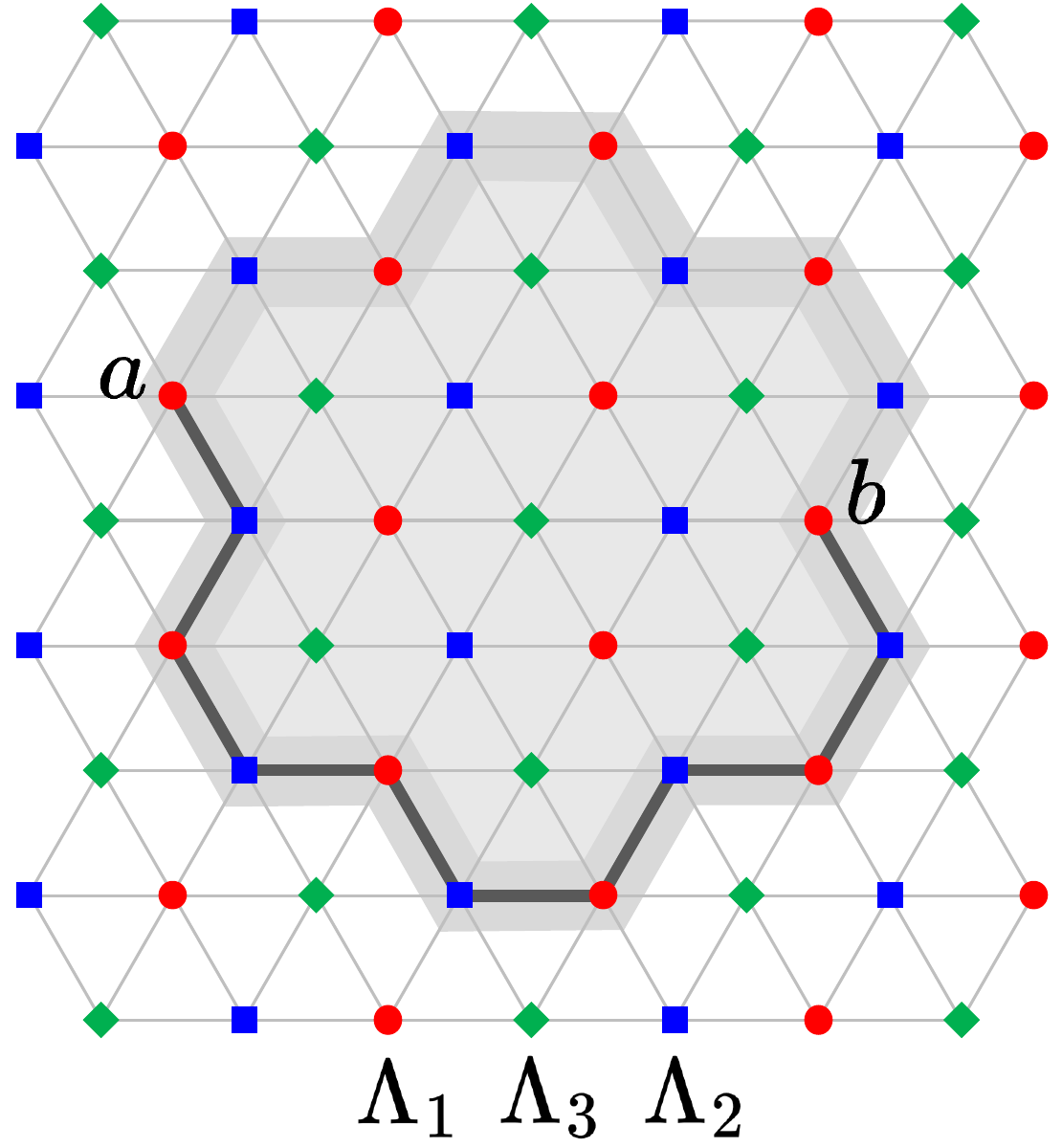}
\caption{Triangular lattice divided into three triangular sublattices $\Lambda_1,\Lambda_2,\Lambda_3$, respectively represented by red dots, blue squares, and green diamonds. All sites within the large shaded region belong to $A$, and those in the shaded outer strip belong to $\partial A$. The 11 sites along the thickened edges belong to the interval $I\subset\partial A$. Sites $a$ and $b$ are as labeled.}
\label{fig:ex}
\end{figure}

Let us define a disk $A$ whose outermost sites belong only to sublattices $\Lambda_1$ and $\Lambda_2$. We will regard a site as belonging to $\partial A$ if at least one of its nearest neighbors lies outside $A$. Furthermore, we define an interval $I\subset\partial A$ whose endpoints $a$ and $b$ belong to sublattice $\Lambda_1$. For instance, consider the geometry depicted in Fig.~\ref{fig:ex}.

To compute the anomaly index, we first choose the following set of restricted symmetry operators:
\begin{align}
    U^A_\mathbf{g}&=\left(\prod_{\langle ijk\rangle\subset A}CCZ_{ijk}\right)^{g_4}\left(\prod_{i\in\Lambda_1\cap A}X_i\right)^{g_1}\left(\prod_{i\in\Lambda_2\cap A}X_i\right)^{g_2}\left(\prod_{i\in\Lambda_3\cap A}X_i\right)^{g_3}.
\end{align}
As a result of this choice, we find that
\begin{equation}
    \Om_{\mathbf{g},\mathbf{h}}
    =\left(\prod_{\langle ij\rangle\subset\partial A}CZ_{ij}\right)^{g_3h_4}
\end{equation}
where $\langle ij\rangle$ indexes the set of all links of the lattice. $\Om_{\mathbf{g},\mathbf{h}}$ has trivial GNVW index since it is an FDQC, so there is no need to introduce the ancillary Hilbert space $\mathcal{H}_{\partial A}$. We choose the following restriction of $\Om_{\mathbf{g},\mathbf{h}}$ to the interval $I$:
\begin{equation}
    \Om^I_{\mathbf{g},\mathbf{h}}=\left(\prod_{\langle ij\rangle\subset I}CZ_{ij}\right)^{g_3h_4}.
\end{equation}
Consequently, we find that
\begin{equation}\label{eqn:gammaexample}
    \Ga_{\mathbf{g},\mathbf{h},\mathbf{k}}
    =(Z_aZ_b)^{g_2h_3k_4}.
\end{equation}
To see this, note that in this example $\Om^I_{\mathbf{g},\mathbf{h}}\Om^I_{\mathbf{gh},\mathbf{k}}=\Om^I_{\mathbf{h},\mathbf{k}}\Om^I_{\mathbf{g},\mathbf{hk}}$, so $\Ga_{\mathbf{g},\mathbf{h},\mathbf{k}}=\Omega^I_{\mathbf{h},\mathbf{k}}({}^{\mathbf{g}}\Omega^I_{\mathbf{h},\mathbf{k}})^{-1}$.
Equation (\ref{eqn:gammaexample}) then follows from the relations
\begin{equation}
\begin{split}
    X_iCZ_{ij}&=Z_jCZ_{ij}X_i,\\
    X_jCZ_{ij}&=Z_iCZ_{ij}X_j.
\end{split}
\end{equation}
Moreover, we find that $\Delta_{\mathbf{g},\mathbf{h},(\mathbf{k},\mathbf{l})}$, as well as $\Delta^a_{\mathbf{g},\mathbf{h},(\mathbf{k},\mathbf{l})}$, are the identity operator for all $\mathbf{g},\mathbf{h},\mathbf{k},\mathbf{l}\in G$. Proceeding onward, we restrict $\Ga_{\mathbf{g},\mathbf{h},\mathbf{k}}$ to
\begin{equation}
    \Ga^a_{\mathbf{g},\mathbf{h},\mathbf{k}}=(Z_a)^{g_2h_3k_4}.
\end{equation}
Finally, the formula (\ref{eq:omega}) yields the 4-cocycle $\omega:G\times G\times G\times G\to U(1)$ with values
\begin{equation}
    \omega(\mathbf{g},\mathbf{h},\mathbf{k},\mathbf{l})=(-1)^{g_1h_2k_3l_4}.
\end{equation}
This 4-cocycle corresponds to the ``type-IV" anomaly of the symmetry group $G$ \cite{Propitius,WangLevin}. The symmetry generators (\ref{eq:Z2_1}-\ref{eq:Z2_4}) have a simple interpretation that is consistent with this result: the first three generators (\ref{eq:Z2_1}-\ref{eq:Z2_3}) constitute an on-site $\mathbb{Z}_2\times\mathbb{Z}_2\times\mathbb{Z}_2$ symmetry, and the fourth generator (\ref{eq:Z2_4}) is a symmetric entangler for the 2D ``type-III" SPT protected by this on-site symmetry \cite{DecoratedDomainWall}.\footnote{A symmetric entangler for an SPT is a locality preserving unitary which commutes with the symmetry operators and maps a product state into the SPT ground state.}

\section{Diagrammatic representation of the anomaly formula}
\label{diagrams}

In Sec.~\ref{sec:index}, we gave explicit algebraic definitions of the operators $\Omega_{g,h}$, $\Gamma_{g,h,k}$, and $\Delta_{g,h,(k,l)}$, culminating in a formula for the anomaly 4-cocycle $\omega(g,h,k,l)$.
Although not arbitrary, these expressions may seem somewhat arcane. 
In this section, we give graphical representations of the defining expressions for each of these objects, as well as the non-abelian 2-cocycle and 3-cocycle conditions that they satisfy, in an effort to demystify our procedure.
For simplicity, we take the $H^2(G,\mathbb{Q}_+)$ index of the symmetry to be trivial.

Our diagrammatic formalism is reminiscent of structures from higher category theory: Just as pasting diagrams encode relationships between ordered compositions of morphisms, the diagrams we draw encode relationships between ordered compositions of operators. Although we do not use the categorical language in our discussion, we expect that this analogy can be made precise and generalized to higher dimensions.

We begin by considering the original symmetry operators $\{U_g\}$, which are subject to the group multiplication law $U_gU_h=U_{gh}$. This equation may be viewed as a ``non-abelian 1-cocycle condition", and is represented graphically as a 1-cube:
\begin{center}
    \begin{tikzpicture}
        \node at (0,0){$U_gU_h$};
        \node at (0,2){$U_{gh}$};
        \node[rotate=90] at (0,1){$=$};
    \end{tikzpicture}
\end{center}

Restricting these symmetry operators to the region $A$, this constraint becomes the defining formula for $\Omega_{g,h}$:
\begin{equation}
    \Omega_{g,h}U^A_{gh}=U_g^AU_h^A.
\end{equation}
After the symmetry restriction, the 1-cube now represents this operator:
\begin{center}
    \begin{tikzpicture}
        \node at (0,0){$U_g^AU_h^A$};
        \node at (0,2){$U_{gh}^A$};
        \draw[thick,black,->](0,1.6)--(0,.4);
        \node at (-.5,1){$\Omega_{g,h}$};
    \end{tikzpicture}
\end{center}
The $\Om$ operators obey the non-abelian 2-cocycle condition:
\begin{equation}
    \Omega_{g,h}\Omega_{gh,k}={}^g\Omega_{h,k}\Omega_{g,hk}.\label{eq:NA2cocycle3}
\end{equation}
Graphically, this constraint is represented by gluing four 1-cubes together to form a 2-cube:
\begin{center}
    \begin{tikzpicture}
        \node at (0,0){$U_g^AU_h^AU_k^A$};
        \node at (0,4){$U_{ghk}^A$};
        \node at (-1.9,2){$U_g^AU_{hk}^A$};
        \node at (1.9,2){$U_{gh}^AU_k^A$};
        \draw[thick,black,->](-.5,3.6)--(-1.5,2.4);
        \draw[thick,black,->](.5,3.6)--(1.5,2.4);
        \draw[thick,black,->](-1.5,1.6)--(-.5,.4);
        \draw[thick,black,->](1.5,1.6)--(.5,.4);
        \node at (-1.4,3.3){$\Omega_{g,hk}$};
        \node at (1.4,3.3){$\Omega_{gh,k}$};
        \node at (-1.5,.7){${}^g\Omega_{h,k}$};
        \node at (1.5,.7){$\Omega_{g,h}$};
        \node at (0,2){$=$};
    \end{tikzpicture}
\end{center}
In this diagram, each vertex represents an operator supported near $A$. Moreover, each arrow represents an operator supported near $\partial A$, equal to $U_fU_i^{-1}$ where $U_i$ is the operator at the initial vertex and $U_f$ is the operator at the final vertex. Composition of arrows into paths corresponds to composition of operators. The two paths from the top vertex to the bottom vertex correspond to the two sides of (\ref{eq:NA2cocycle3}). Equality follows from the fact that the two paths have the same initial and final vertices.

Restricting the $\Om$ operators to the interval $I$,\footnote{We have already assumed that the symmetry has trivial $H^2(G,\mathbb{Q}_+)$ index. Let us further assume that the symmetry restrictions are chosen such that the $\Om$ operators all have trivial GNVW index.} this constraint becomes the defining formula for $\Gamma_{g.h.k}$:
\begin{equation}
    \Gamma_{g,h,k}\cdot\hp{}^g\Omega_{h,k}^I\Omega_{g,hk}^I=\Omega_{g,h}^I\Omega_{gh,k}^I.
\end{equation}
The $2$-cube of the previous diagram now represents this operator:
\begin{center}
    \begin{tikzpicture}
        \node at (0,0){$g\cdot h\cdot k$};
        \node at (0,4){$ghk$};
        \node at (-2,2){$g\cdot hk$};
        \node at (2,2){$gh\cdot k$};
        \draw[thick,black,->](-.5,3.6)--(-1.5,2.4);
        \draw[thick,black,->](.5,3.6)--(1.5,2.4);
        \draw[thick,black,->](-1.5,1.6)--(-.5,.4);
        \draw[thick,black,->](1.5,1.6)--(.5,.4);
        \node at (-1.4,3.3){$\Omega_{g,hk}^I$};
        \node at (1.4,3.3){$\Omega_{gh,k}^I$};
        \node at (-1.5,.7){${}^g\Omega_{h,k}^I$};
        \node at (1.5,.7){$\Omega_{g,h}^I$};
        \draw[thick,black,double,->](-.4,2)--(.4,2);
        \node at (0,2.5){$\Gamma_{g,h,k}$};
    \end{tikzpicture}
\end{center}
In this version of the diagram, the vertices of the 2-cube are no longer interpreted as products of restricted symmetries. Instead they are given a matching abstract label.

In addition to the non-abelian 2-cocycle condition, consider the trivial equality
\begin{equation}
    {}^{g\cdot h}\Omega_{k,l}\Omega_{g,h}=\Omega_{g,h}\cdot\hp{}^{gh}\Omega_{k,l}.
\end{equation}
This equality has the following graphical representation:
\begin{center}
    \begin{tikzpicture}
        \node at (0,0){$U^A_gU^A_hU^A_kU^A_l$};
        \node at (0,4){$U^A_{gh}U^A_{kl}$};
        \node at (-2,2){$U^A_gU^A_hU^A_{kl}$};
        \node at (2,2){$U^A_{gh}U^A_kU^A_l$};
        \draw[thick,black,->](-.5,3.6)--(-1.5,2.4);
        \draw[thick,black,->](.5,3.6)--(1.5,2.4);
        \draw[thick,black,->](-1.5,1.6)--(-.5,.4);
        \draw[thick,black,->](1.5,1.6)--(.5,.4);
        \node at (-1.4,3.3){$\Omega_{g,h}$};
        \node at (1.4,3.3){${}^{gh}\Omega_{k,l}$};
        \node at (-1.5,.7){${}^{g\cdot h}\Omega_{k,l}$};
        \node at (1.5,.7){$\Omega_{g,h}$};
        \node at (0,2){$=$};
    \end{tikzpicture}
\end{center}
Upon restricting the $\Om$ operators to interval $I$, this condition becomes the defining formula for $\Delta_{g,h,(k,l)}$:
\begin{equation}
    \Delta_{g,h,(k,l)}\cdot\hp{}^{g\cdot h}\Omega_{k,l}^I\Omega^I_{g,h}=\Omega^I_{g,h}\cdot\hp{}^{gh}\Omega^I_{k,l}.
\end{equation}
Just as before, the 2-cube of the previous diagram, with each vertex operator label replaced by an abstract label, now represents this operator:
\begin{center}
    \begin{tikzpicture}
        \node at (0,0){$g\cdot h\cdot k\cdot l$};
        \node at (0,4){$gh\cdot kl$};
        \node at (-2,2){$g\cdot h\cdot kl$};
        \node at (2,2){$gh\cdot k\cdot l$};
        \draw[thick,black,->](-.5,3.6)--(-1.5,2.4);
        \draw[thick,black,->](.5,3.6)--(1.5,2.4);
        \draw[thick,black,->](-1.5,1.6)--(-.5,.4);
        \draw[thick,black,->](1.5,1.6)--(.5,.4);
        \node at (-1.4,3.3){$\Omega_{g,h}^I$};
        \node at (1.4,3.3){${}^{gh}\Omega_{k,l}^I$};
        \node at (-1.5,.7){${}^{g\cdot h}\Omega_{k,l}^I$};
        \node at (1.5,.7){$\Omega_{g,h}^I$};
        \draw[thick,black,double,->](-.4,2)--(.4,2);
        \node at (0,2.5){$\Delta_{g,h,(k,l)}$};
    \end{tikzpicture}
\end{center}

As we saw algebraically in Sec.~\ref{sec:explicit}, the $\Ga$ and $\Delta$ operators obey the non-abelian 3-cocycle condition:

$$
\Ga_{g,h,k}\cdot\hp{}^{^g(h,k)}\Ga_{g,hk,l}\cdot\hp{}^g\Ga_{h,k,l}=\hp{}
    ^{(g,h)}\Ga_{gh,k,l}\cdot\hp{}\Delta_{g,h,(k,l)}\cdot\hp{}^{^{g\cdot h}(k,l)}\Ga_{g,h,kl}.\label{eq:NA3cocycle3}
$$
Graphically, this constraint is represented by gluing six $2$-cubes together to form a 3-cube:
\tikzmath{
\a=0; 
\b=9; 
\x=3; 
\y=1.7; 
\dy=.35; 
\dx=.7; 
\ddx=.4; 
}
\begin{center}
    \begin{tikzpicture}
        \node at (\a,4*\y){$ghkl$};
        \node at (\a-\x,3*\y){$g\cdot hkl$};
        \node at (\a+\x,3*\y){$ghk\cdot l$};
        \node at (\a-\x,\y){$g\cdot h\cdot kl$};
        \node at (\a+\x,\y){$gh\cdot k\cdot l$};
        \node at (\a,0){$g\cdot h\cdot k\cdot l$};
        \node at (\a,2*\y){$gh\cdot kl$};
        \draw[thick,black,->](\a-\dx,4*\y-\dy)--(\a-\x+\dx,3*\y+\dy);
        \draw[thick,black,->](\a+\dx,4*\y-\dy)--(\a+\x-\dx,3*\y+\dy);
        \draw[thick,black,->](\a-\x,3*\y-\dy)--(\a-\x,\y+\dy);
        \draw[thick,black,->](\a+\x,3*\y-\dy)--(\a+\x,\y+\dy);
        \draw[thick,black,->](\a-\x+\dx,\y-\dy)--(\a-\dx,\dy);
        \draw[thick,black,->](\a+\x-\dx,\y-\dy)--(\a+\dx,\dy);
        \draw[thick,black,->](\a-\dx,2*\y-\dy)--(\a-\x+\dx,\y+\dy);
        \draw[thick,black,->](\a+\dx,2*\y-\dy)--(\a+\x-\dx,\y+\dy);
        \draw[thick,black,->](\a,4*\y-\dy)--(\a,2*\y+\dy);
        \node at (\a-\x*.6,3.75*\y){$\Omega^I_{g,hkl}$};
        \node at (\a+\x*.6,3.75*\y){$\Omega^I_{ghk,l}$};
        \node at (\a-\x*1.2,2*\y){${}^g\Omega^I_{h,kl}$};
        \node at (\a+\x*1.2,2*\y){$\Omega^I_{gh,k}$};
        \node at (\a-\x*.6,.25*\y){${}^{g\cdot h}\Omega^I_{k,l}$};
        \node at (\a+\x*.6,.25*\y){$\Omega^I_{g,h}$};
        \node at (\a-\x*.55,1.8*\y){$\Omega^I_{g,h}$};
        \node at (\a+\x*.55,1.8*\y){${}^{gh}\Omega^I_{k,l}$};
        \node at (\a-\x*.2,3.25*\y){$\Omega^I_{gh,kl}$};
        \draw[double,<-,black,thick](\a+\ddx,\y)--(\a-\ddx,\y);
        \draw[double,<-,black,thick](\a-\x/2+\ddx,2.5*\y)--(\a-\x/2-\ddx,2.5*\y);
        \draw[double,<-=,black,thick](\a+\x/2+\ddx,2.5*\y)--(\a+\x/2-\ddx,2.5*\y);
        \node at (\a-.45*\x,2.5*\y+.5){$^{^{g\cdot h}(k,l)}\Gamma_{g,h,kl}$};
        \node at (\a+\x/2,2.5*\y+.5){${}^{(g,h)}\Gamma_{gh,k,l}$};
        \node at (\a,\y+.5){$\Delta_{g,h,(k,l)}$};
        \begin{scope}[shift={(-2.5,-7)}]
        \node at (\b,4*\y){$ghkl$};
        \node at (\b-\x,3*\y){$g\cdot hkl$};
        \node at (\b+\x,3*\y){$ghk\cdot l$};
        \node at (\b-\x,\y){$g\cdot h\cdot kl$};
        \node at (\b+\x,\y){$gh\cdot k\cdot l$};
        \node at (\b,0){$g\cdot h\cdot k\cdot l$};
        \node at (\b,2*\y){$g\cdot hk\cdot l$};
        \draw[thick,black,->](\b-\dx,4*\y-\dy)--(\b-\x+\dx,3*\y+\dy);
        \draw[thick,black,->](\b+\dx,4*\y-\dy)--(\b+\x-\dx,3*\y+\dy);
        \draw[thick,black,->](\b-\x,3*\y-\dy)--(\b-\x,\y+\dy);
        \draw[thick,black,->](\b+\x,3*\y-\dy)--(\b+\x,\y+\dy);
        \draw[thick,black,->](\b-\x+\dx,\y-\dy)--(\b-\dx,\dy);
        \draw[thick,black,->](\b+\x-\dx,\y-\dy)--(\b+\dx,\dy);
        \draw[thick,black,->](\b-\x+\dx,3*\y-\dy)--(\b-\dx,2*\y+\dy);
        \draw[thick,black,->](\b+\x-\dx,3*\y-\dy)--(\b+\dx,2*\y+\dy);
        \draw[thick,black,->](\b,2*\y-\dy)--(\b,\dy);
        \node at (\b-\x*.6,3.75*\y){$\Omega^I_{g,hkl}$};
        \node at (\b+\x*.6,3.75*\y){$\Omega^I_{ghk,l}$};
        \node at (\b-\x*1.2,2*\y){${}^g\Omega^I_{h,kl}$};
        \node at (\b+\x*1.2,2*\y){$\Omega^I_{gh,k}$};
        \node at (\b-\x*.6,.25*\y){${}^{g\cdot h}\Omega^I_{k,l}$};
        \node at (\b+\x*.6,.25*\y){$\Omega^I_{g,h}$};
        \node at (\b-\x*.2,\y){${}^g\Omega^I_{h,k}$};
        \node at (\b-\x*.6,2.25*\y){${}^g\Omega^I_{hk,l}$};
        \node at (\b+\x*.6,2.25*\y){$\Omega^I_{g,hk}$};
        \draw[double,<-,black,thick](\b+\ddx,3*\y)--(\b-\ddx,3*\y);
        \draw[double,<-,black,thick](\b-\x/2+\ddx,1.5*\y)--(\b-\x/2-\ddx,1.5*\y);
        \draw[double,<-,black,thick](\b+\x/2+\ddx,1.5*\y)--(\b+\x/2-\ddx,1.5*\y);
        \node at (\b-\x/2,1.5*\y+.5){${}^g\Gamma_{h,k,l}$};
        \node at (\b+\x/2,1.5*\y+.5){$\Gamma_{g,h,k}$};
        \node at (\b,3*\y+.5){${}^{{}^g(h,k)}\Gamma_{g,hk,l}$};
        \end{scope}
        \node at (3.2,-.2){\Large{=}};
    \end{tikzpicture}
\end{center}
Here, the diagram on the right represents the ``front" of the 3-cube, whereas the diagram on the left represents the ``back". In each diagram, each path of arrows from the top vertex to the bottom vertex represents an operator supported near $I$. Moreover, each 2-cube represents an operator supported near $\partial I$, which is equal to the ``difference between paths". More precisely, this operator equals $\Om_f\Om_i^{-1}$ where $\Om_i$ is the operator corresponding to the total path (\textit{i.e.}, from top to bottom) to the left of the 2-cube, and $\Om_f$ is the operator corresponding to the total path to the right of the 2-cube. 2-cubes can be composed into ``paths between paths", representing composition of the corresponding operators. In this sense, the front and back surfaces of the 3-cube represent the operators on the two sides of (\ref{eq:NA3cocycle3}). Equality follows from the fact that the two ``paths between paths" connect the same initial and final paths.

Finally, we restrict the $\Ga$ and $\Delta$ operators to point $a$ to obtain the anomaly formula (\ref{eq:omega}):
\begin{equation}
    \omega(g,h,k,l)\equiv\Ga^a_{g,h,k}\cdot\hp{}^{^g(h,k)}\Ga^a_{g,hk,l}\cdot\hp{}^g\Ga^a_{h,k,l}\left(\hp{}^{(g,h)}\Ga^a_{gh,k,l}\cdot\hp{}\Delta^a_{g,h,(k,l)}\cdot\hp{}^{^{g\cdot h}(k,l)}\Ga^a_{g,h,kl}\right)^{-1}.
    \label{eq:omega2}
\end{equation}
The 3-cube of the previous diagram now represents this $U(1)$ phase:
\begin{center}
    \begin{tikzpicture}
        \node at (\a,4*\y){$ghkl$};
        \node at (\a-\x,3*\y){$g\cdot hkl$};
        \node at (\a+\x,3*\y){$ghk\cdot l$};
        \node at (\a-\x,\y){$g\cdot h\cdot kl$};
        \node at (\a+\x,\y){$gh\cdot k\cdot l$};
        \node at (\a,0){$g\cdot h\cdot k\cdot l$};
        \node at (\a,2*\y){$gh\cdot kl$};
        \draw[thick,black,->](\a-\dx,4*\y-\dy)--(\a-\x+\dx,3*\y+\dy);
        \draw[thick,black,->](\a+\dx,4*\y-\dy)--(\a+\x-\dx,3*\y+\dy);
        \draw[thick,black,->](\a-\x,3*\y-\dy)--(\a-\x,\y+\dy);
        \draw[thick,black,->](\a+\x,3*\y-\dy)--(\a+\x,\y+\dy);
        \draw[thick,black,->](\a-\x+\dx,\y-\dy)--(\a-\dx,\dy);
        \draw[thick,black,->](\a+\x-\dx,\y-\dy)--(\a+\dx,\dy);
        \draw[thick,black,->](\a-\dx,2*\y-\dy)--(\a-\x+\dx,\y+\dy);
        \draw[thick,black,->](\a+\dx,2*\y-\dy)--(\a+\x-\dx,\y+\dy);
        \draw[thick,black,->](\a,4*\y-\dy)--(\a,2*\y+\dy);
        \draw[double,<-,black,thick](\a+\ddx,\y)--(\a-\ddx,\y);
        \draw[double,<-,black,thick](\a-\x/2+\ddx,2.5*\y)--(\a-\x/2-\ddx,2.5*\y);
        \draw[double,<-=,black,thick](\a+\x/2+\ddx,2.5*\y)--(\a+\x/2-\ddx,2.5*\y);
        \node at (\a-.45*\x,2.5*\y+.5){$^{^{g\cdot h}(k,l)}\Gamma_{g,h,kl}^a$};
        \node at (\a+\x/2,2.5*\y+.5){${}^{(g,h)}\Gamma_{gh,k,l}^a$};
        \node at (\a,\y+.5){$\Delta_{g,h,(k,l)}^a$};
        \begin{scope}[shift={(-2.5,-7)}]
        \node at (\b,4*\y){$ghkl$};
        \node at (\b-\x,3*\y){$g\cdot hkl$};
        \node at (\b+\x,3*\y){$ghk\cdot l$};
        \node at (\b-\x,\y){$g\cdot h\cdot kl$};
        \node at (\b+\x,\y){$gh\cdot k\cdot l$};
        \node at (\b,0){$g\cdot h\cdot k\cdot l$};
        \node at (\b,2*\y){$g\cdot hk\cdot l$};
        \draw[thick,black,->](\b-\dx,4*\y-\dy)--(\b-\x+\dx,3*\y+\dy);
        \draw[thick,black,->](\b+\dx,4*\y-\dy)--(\b+\x-\dx,3*\y+\dy);
        \draw[thick,black,->](\b-\x,3*\y-\dy)--(\b-\x,\y+\dy);
        \draw[thick,black,->](\b+\x,3*\y-\dy)--(\b+\x,\y+\dy);
        \draw[thick,black,->](\b-\x+\dx,\y-\dy)--(\b-\dx,\dy);
        \draw[thick,black,->](\b+\x-\dx,\y-\dy)--(\b+\dx,\dy);
        \draw[thick,black,->](\b-\x+\dx,3*\y-\dy)--(\b-\dx,2*\y+\dy);
        \draw[thick,black,->](\b+\x-\dx,3*\y-\dy)--(\b+\dx,2*\y+\dy);
        \draw[thick,black,->](\b,2*\y-\dy)--(\b,\dy);
        \draw[double,<-,black,thick](\b+\ddx,3*\y)--(\b-\ddx,3*\y);
        \draw[double,<-,black,thick](\b-\x/2+\ddx,1.5*\y)--(\b-\x/2-\ddx,1.5*\y);
        \draw[double,<-,black,thick](\b+\x/2+\ddx,1.5*\y)--(\b+\x/2-\ddx,1.5*\y);
        \node at (\b-\x/2,1.5*\y+.5){${}^g\Gamma^a_{h,k,l}$};
        \node at (\b+\x/2,1.5*\y+.5){$\Gamma^a_{g,h,k}$};
        \node at (\b,3*\y+.5){${}^{{}^g(h,k)}\Gamma^a_{g,hk,l}$};
        \end{scope}
        \node at (4.3,.4){$\omega(g,h,k,l)$};
        \begin{scope}[rotate=-35,shift={(-1.8,-1.2)}]
        \draw[thick,black](4,3)--(5,3);
        \draw[thick,black](4,3.1)--(5.1,3.1);
        \draw[thick,black](4,3.2)--(5,3.2);
        \draw[thick,black](4.9,3.3)--(5.1,3.1)--(4.9,2.9);
        \end{scope}
    \end{tikzpicture}
\end{center}
In this final diagram, the edges of the 3-cube are no longer interpreted as operators, so their labels have been erased. This completes the graphical description of our anomaly formula and its derivation.

\section {Discussion}
\label{sec:discussion}

In this work, we have used the notion of symmetry restriction to compute the anomaly of a finite group unitary symmetry acting by FDQCs on a 2D lattice system, thereby generalizing the results of \cite{ElseNayak}. Via the bulk-boundary correspondence, our results provide a method for identifying bulk 3D SPT phases.

From a mathematical perspective, this work may be regarded as a step towards a theory of finite group representations by QCAs on many-body Hilbert spaces. In this context, it is natural to regard on-site $G$-representations as trivial, and a pair of $G$-representations as equivalent if they can be transformed into one another via the following pair of operations:
\begin{enumerate}
    \item Tensoring with an on-site $G$-representation on an ancillary Hilbert space.
    \item Conjugation by an arbitrary QCA.
\end{enumerate}
A $G$-representation which is not on-site but can be transformed into an on-site symmetry via these operations is called an ``onsiteable" symmetry. The problem of classifying \textit{non}-onsiteable symmetries may be considered in any spatial dimension. In 1D, it has been established that the equivalence classes of $G$-representations are in one-to-one correspondence with anomaly classes in $H^3(G,U(1))$ \cite{SeifnashriShirley}. In other words, a 1D $G$-symmetry is onsiteable if and only if it is anomaly-free.

In Sec.~\ref{sec:index}, we defined a total index valued in $H^2(G,\mathbb{Q}_+)\times H^4(G,U(1))$, that labels equivalence classes of 2D $G$-representations. In light of this finding, there are two natural questions to ask. First, is every value of this index realized by some $G$-representation? Second, is this index complete? In other words, if two $G$-representations have the same index, do they necessarily belong to the same equivalence class? To answer the first question, we note that 2D boundary theories of 3D group cohomology SPT models \cite{ChenSPT} furnish $G$-representations with arbitrary $H^4(G,U(1))$ index \cite{ElseNayak}. Moreover, Ref.~\cite{H2} constructs examples of $G$-symmetries with arbitrary $H^2(G,\mathbb{Q}_+)$ index. By tensoring different $G$-symmetries with one another, it is therefore possible to realize $G$-representations with arbitrary total index. On the other hand, the second question remains open; we conjecture that the answer is ``yes".

It is also natural to ask how this story plays out in higher dimensions. We conjecture that, in $d$ spatial dimensions, the total obstruction to onsiteability is captured by an index valued in the degree-$(d+2)$ generalized cohomology of $G$ with coefficients in the $\Om$-spectrum of QCAs.\footnote{QCAs have been conjectured to form an $\Om$-spectrum in Ref.~\cite{LongElse}} In 1D and 2D, this index reduces to the known $H^3(G,U(1))$ and $H^2(G,\mathbb{Q}_+)\times H^4(G,U(1))$ indices. We leave a systematic study to future work.\footnote{This problem is closely related to the conjectured classification of $d$-dimensional SPTs in terms of generalized cohomology with coefficients in the $\Om$-spectrum of invertible phases \cite{Kitaev1,Kitaev2,Theo}.}

It would also be worthwhile to extend the symmetry restriction approach to antiunitary symmetries, fermionic systems, and systems with non-tensor product Hilbert spaces.\footnote{For instance, the boundary theory of the $\mathbb{Z}_2\times\mathbb{Z}_2^f$ root SPT introduced in \cite{Metlitski} has both fermionic degrees of freedom and a non-tensor product Hilbert space.}
It is also tempting to ask to what extent the method can be applied to anomalous generalized symmetries \cite{Gaiotto,McGreevy,Cordova} and crystalline symmetries \cite{Crystalline1,Crystalline2}.\newline

\textit{Note added:} During the preparation of this manuscript, we became aware of similar results obtained by Kapustin \cite{Kapustin} and Czajka, Geiko, and Thorngren \cite{Thorngren}.

\section {Acknowledgments}

We wish to thank to David Penneys and Sean Sanford for interesting discussions. We are especially grateful to Michael Levin for inspiring discussions and collaboration on a related project. The work of WS is supported by the Leinweber Institute for Theoretical Physics and the
Ultra-Quantum Matter Simons Collaboration (Simons Foundation Grant No. 651444). KK was supported by NSF DMS 1654159, the
Center for Emergent Materials which is an NSF-funded MRSEC under Grant No. DMR-2011876, and the Center for Quantum Information Science and Engineering at The Ohio State University. The authors of this paper were ordered alphabetically.

\appendix

\section{Review: Group cohomology}
\label{app:cohomology}

In this appendix, we briefly review the definitions of the cohomology groups $H^2(G,\mathbb{Q}_+)$, $H^3(G,U(1))$, and $H^4(G,U(1))$.

\subsection{$H^2(G,\mathbb{Q}_+)$}

A $\mathbb{Q}_+$-valued \textit{2-cocycle} is a function $\nu:G\times G\to\mathbb{Q}_+$ that satisfies the \textit{2-cocycle condition}
\begin{equation}
    \nu(g,h)\nu(gh,k)=\nu(h,k)\nu(g,hk)\label{eq:2cocycle}
\end{equation}
for any triple of group elements $g,h,k\in G$. The set of 2-cocycles is subject to the following equivalence relation: $\nu\sim\nu'$ if there exists a function $\mu:G\to\mathbb{Q}_+$ such that 
\begin{equation}
    \nu'(g,h)=\nu(g,h)\frac{\mu(g)\mu(h)}{\mu(gh)}.\label{eq:2coboundary}
\end{equation}
The cohomology group $H^2(G,\mathbb{Q}_+)$ is defined as the set of equivalence classes $[\nu]$ of 2-cocycles, with the composition law $[\nu]+[\nu']=[\nu\cdot\nu']$. The ratio to the right of $\nu(g,h)$ in (\ref{eq:2coboundary}) is referred to as the \textit{2-coboundary} $d\mu$.

\subsection{$H^3(G,U(1))$}

A $U(1)$-valued \textit{3-cocycle} is a function $\alpha:G\times G\times G\to U(1)$ that satisfies the \textit{3-cocycle condition}
\begin{equation}
    \alpha(g,h,k)\alpha(g,hk,l)\alpha(h,k,l)=\alpha(gh,k,l)\alpha(g,h,kl)\label{eq:3cocycle}
\end{equation}
for any 4-tuple of group elements $g,h,k,l\in G$. The set of 3-cocycles is subject to the following equivalence relation: $\alpha\sim\alpha'$ if there exists a function $\beta:G\times G\to U(1)$ such that
\begin{equation}
    \alpha'(g,h,k)=\alpha(g,h,k)\frac{\beta(g,h)\beta(gh,k)}{\beta(h,k)\beta(g,hk)}.\label{eq:3coboundary}
\end{equation}
The cohomology group $H^3(G,U(1))$ is defined as the set of equivalence classes $[\alpha]$ of 3-cocycles, with the composition law $[\alpha]+[\alpha']=[\alpha\cdot\alpha']$. The ratio to the right of $\alpha(g,h,k)$ in (\ref{eq:3coboundary}) is referred to as the \textit{3-coboundary} $d\beta$.

\subsection{$H^4(G,U(1))$}

A $U(1)$-valued \textit{4-cocycle} is a function $\omega:G\times G\times G\times G\to U(1)$ that satisfies the \textit{4-cocycle condition}
\begin{equation}
    \omega(h,k,l,m)\omega(g,hk,l,m)\omega(g,h,k,lm)=\omega(gh,k,l,m)\omega(g,h,kl,m)\omega(g,h,k,l)\label{eq:4cocycle}
\end{equation}
for any 5-tuple of group elements $g,h,k,l,m\in G$. The set of 4-cocycles is subject to the following equivalence relation: $\omega\sim\omega'$ if there exists a function $\rho:G\times G\times G\to U(1)$ such that
\begin{equation}
    \omega'(g,h,k,l)=\omega(g,h,k,l)\frac{\rho(g,h,k)\rho(g,hk,l)\rho(h,k,l)}{\rho(gh,k,l)\rho(g,h,kl)}.\label{eq:4coboundary}
\end{equation}
The cohomology group $H^4(G,U(1))$ is defined as the set of equivalence classes $[\omega]$ of 4-cocycles, with the composition law $[\omega]+[\omega']=[\omega\cdot\omega']$. The ratio to the right of $\omega(g,h,k,l)$ in (\ref{eq:4coboundary}) is referred to as the \textit{4-coboundary} $d\rho$.

\section{Review: Else-Nayak index}

In this appendix, we review the Else-Nayak \cite{ElseNayak} procedure for computing the anomaly of a unitary symmetry of a 1D spin chain. The input to the procedure is a collection $\{U_g\}$ of FDQCs satisfying the group law $U_gU_h\propto U_{gh}$, and the output is a 3-cocycle $\alpha:G\times G\times G\to U(1)$. The procedure consists of the following steps:

\begin{enumerate}[leftmargin=*]
    \item Choose a finite interval $I=[a,b]$, which is much longer than the range of the symmetry.

    \item Choose a restriction $U_g^I$ of each symmetry operator $U_g$ to the interval $I$.

    \item The restricted symmetry operators obey the group multiplication law up to composition by a 0D FDQC supported near points $a$ and $b$. In particular, for each pair $g,h\in G$, define the unitary
    \begin{equation}
        \Om_{g,h}\equiv U^I_gU^I_h\left(U^I_{gh}\right)^{-1}.
    \end{equation}
    These operators obey the ``non-abelian 2-cocycle condition" for all $g,h,k\in G$:
    \begin{equation}
        \Om_{g,h}\Om_{gh,k}=\hp{}^g\Om_{h,k}\Om_{g,hk},\label{eq:ElseNayak2cocycle}
    \end{equation}
    where $^gW$ is a shorthand for $U_g^IW(U^I_g)^{-1}$. This constraint is derived from associativity of the restricted symmetries: on one hand,
    \begin{align}
         (U^I_gU^I_h)U^I_k&=\Om_{g,h} U^I_{gh} U^I_k=\Om_{g,h}\Om_{gh,k} U^I_{ghk}.\label{eq:ghk1}
    \end{align}
    On the other hand,
    \begin{align}
        U^I_g(U^I_hU^I_k)&=U^I_g\Om_{h,k}U^I_{hk}=\hp{}^g\Om_{h,k}\Om_{g,hk}U^I_{ghk}.\label{eq:ghk2}
    \end{align}
    Comparing (\ref{eq:ghk1}) with (\ref{eq:ghk2}) begets (\ref{eq:ElseNayak2cocycle}).

    \item Choose a restriction $\Om^a_{g,h}$ of $\Om_{g,h}$ to the vicinity of point $a$. This choice is unique up to multiplication by a $U(1)$ phase.

    \item Since the $\Om$ operators satisfy the non-abelian 2-cocycle condition (\ref{eq:ElseNayak2cocycle}), the corresponding $\Om^a$ operators must obey an analogous constraint, up to a $U(1)$ phase. The function $\alpha:G\times G\times G\to U(1)$ is defined in terms of this phase. Specifically,
    \begin{equation}
    \alpha(g,h,k)=\Om^a_{g,h}\Om^a_{gh,k}\left(\hp{}^g\Om^a_{h,k}\Om^a_{g,hk}\right)^{-1},
    \end{equation}
    It was shown in \cite{ElseNayak} that $\alpha$ defined in this manner automatically obeys the 3-cocycle condition (\ref{eq:3cocycle}).
\end{enumerate}

Importantly, the restriction of $\Om_{g,h}$ to $\Om^a_{g,h}$ has an inherent phase ambiguity: we could instead restrict to $\tilde\Om^a_{g,h}=\beta(g,h)\Om^a_{g,h}$ for an arbitrary function $\beta:G\times G\to U(1)$. Given this choice, instead of $\alpha(g,h,k)$, we would obtain the 3-cocycle
\begin{equation}
    \tilde\alpha(g,h,k)=\alpha(g,h,k)\frac{\beta(g,h)\beta(gh,k)}{\beta(h,k)\beta(g,hk)}.
\end{equation}
Since $\tilde\alpha$ differs from $\alpha$ by the 3-coboundary $d\beta$, the two cocycles are equivalent. Moreover, since $\beta$ may be any 2-cochain, $\tilde\alpha$ may be any 3-cocycle equivalent to $\alpha$. It was further shown in \cite{ElseNayak} that different choices of symmetry restriction always yield equivalent 3-cocycles. Therefore, the procedure gives a well-defined cohomology class $[\alpha]\in H^3(G,U(1))$.

\section{Proof of the 4-cocycle condition}
\label{app:cocycle}

In this appendix, we prove that the function $\omega:G\times G\times G\times G\to U(1)$ defined in (\ref{eq:omega}) satisfies the 4-cocycle condition (\ref{eq:4cocycle}). For convenience we will adopt the following notation:
\begin{align}
    \bar\Om_{g,h}&\equiv\Om^I_{g,h},\\
    \bar\Ga_{g,h,k}&\equiv\Ga^a_{g,h,k},\\
    \bar\Delta_{g,h,(k,l)}&\equiv\Delta^a_{g,h,(k,l)}.
\end{align}

To prove (\ref{eq:4cocycle}), we will make use of a pair of equalities, (\ref{eq:equality1}) and (\ref{eq:equality2}). First,
\begin{equation}
    \bar\Delta_{g,h,(k,l)}\cdot\hp{}^{^{g\cdot h}(k,l)}\bar\Delta_{g,h,(kl,m)}\cdot\hp{}^{g\cdot h}\bar{\Ga}_{k,l,m}=\hp{}^{(g,h)\cdot gh}\bar{\Ga}_{k,l,m}\cdot\hp{}\bar\Delta_{g,h,^k(l,m)}\cdot\hp{}^{^{g\cdot h\cdot k}(l,m)}\bar\Delta_{g,h,(k,lm)}\label{eq:equality1}
\end{equation}
where
\begin{equation}
    \bar\Delta_{g,h,\hp{}^k(l,m)}\equiv\hp{}^{(g,h)\cdot gh\cdot k}\Om^L_{l,m}\left(\hp{}^{g\cdot h\cdot k}\Om^L_{l,m}\right)^{-1}.
\end{equation}
To demonstrate (\ref{eq:equality1}), let us first define the operator $\Ga^c_{g,h,k}$ with support near point $c$, via the following equation:
\begin{equation}
    \Ga^a_{g,h,k}\Ga^c_{g,h,k}=\Om^L_{g,h}\Om^L_{gh,k}\big(\hp{}^g\Om^L_{h,k}\Om^L_{g,hk}\big)^{-1}.
\end{equation}
Then observe that
\begin{align}
    &\bar\Delta_{g,h,(k,l)}\cdot\hp{}^{^{g\cdot h}(k,l)}\bar\Delta_{g,h,(kl,m)}\cdot\hp{}^{g\cdot h}\bar{\Ga}_{k,l,m}\\&\hspace{10pt}=\hp{}^{(g,h)\cdot gh}\big(\Om^L_{k,l}\Om^L_{kl,m}\big)\cdot\hp{}^{g\cdot h}\big(\Om^L_{k,l}\Om^L_{kl,m}\big)^{-1}\cdot\hp{}^{g\cdot h}\bar{\Ga}_{k,l,m},\label{eq:eq11}
\end{align}
whereas
\begin{align}
    &\hp{}^{(g,h)\cdot gh}\bar{\Ga}_{k,l,m}\cdot\hp{}\bar\Delta_{g,h,^k(l,m)}\cdot\hp{}^{^{g\cdot h\cdot k}(l,m)}\bar\Delta_{g,h,(k,lm)}\\&\hspace{10pt}=\hp{}^{(g,h)\cdot gh}\big(\bar\Ga_{k,l,m}\cdot\hp{}^k\Om^L_{l,m}\Om^L_{k,lm}\big)\cdot\hp{}^{g\cdot h}\big(\hp{}^k\Om^L_{l,m}\Om^L_{k,lm}\big)^{-1}\\&\hspace{10pt}=\hp{}^{(g,h)\cdot gh}\big(\Ga^a_{k,l,m}\Ga^c_{k,l,m}\cdot\hp{}^k\Om^L_{l,m}\Om^L_{k,lm}\big)\cdot\hp{}^{g\cdot h}\big(\Ga^c_{k,l,m}\cdot\hp{}^k\Om^L_{l,m}\Om^L_{k,lm}\big)^{-1}\\&\hspace{10pt}=\hp{}^{(g,h)\cdot gh}\big(\Om^L_{k,l}\Om^L_{kl,m}\big)\cdot\hp{}^{g\cdot h}\big(\Om^L_{k,l}\Om^L_{kl,m}\big)^{-1}\cdot\hp{}^{g\cdot h}\bar{\Ga}_{k,l,m}.\label{eq:eq12}
\end{align}
Comparing (\ref{eq:eq11}) with (\ref{eq:eq12}), we obtain (\ref{eq:equality1}). The second equality is as follows:
\begin{equation}
    \bar{\Ga}_{g,h,k}\cdot\hp{}^{^g(h,k)}\bar\Delta_{g,hk,(l,m)}\cdot\hp{}^g\bar\Delta_{h,k,(l,m)}=\hp{}^{(g,h)}\bar\Delta_{gh,k,(l,m)}\cdot\hp{}\bar\Delta_{g,h,^k(l,m)}\cdot\hp{}^{^{g\cdot h\cdot k}(l,m)}\bar{\Ga}_{g,h,k}.\label{eq:equality2}
\end{equation}
To derive this equality, observe that
\begin{align}
    &\bar{\Ga}_{g,h,k}\cdot\hp{}^{^g(h,k)}\bar\Delta_{g,hk,(l,m)}\cdot\hp{}^g\bar\Delta_{h,k,(l,m)}\\
    &\hspace{10pt}=\bar{\Ga}_{g,h,k}\cdot\hp{}^{^g(h,k)\cdot(g,hk)\cdot ghk}{\Om}^L_{l,m}\cdot\hp{}(^{g\cdot h\cdot k}{\Om}^L_{l,m})^{-1}\\
    &\hspace{10pt}=\hp{}^{\Ga_{g,h,k}\cdot\hp{}^g(h,k)\cdot(g,hk)\cdot ghk}{\Om}^L_{l,m}\cdot\bar{\Ga}_{g,h,k}\cdot\hp{}(^{g\cdot h\cdot k}{\Om}^L_{l,m})^{-1}\\
    &\hspace{10pt}=\hp{}^{(g,h)\cdot(gh,k)\cdot ghk}{\Om}^L_{l,m}\cdot\bar{\Ga}_{g,h,k}\cdot\hp{}(^{g\cdot h\cdot k}{\Om}^L_{l,m})^{-1}.\label{eq:eq21}
\end{align}
On the other hand,
\begin{align}
    &^{(g,h)}\bar\Delta_{gh,k,(l,m)}\cdot\hp{}\bar\Delta_{g,h,^k(l,m)}\cdot\hp{}^{^{g\cdot h\cdot k}(l,m)}\bar{\Ga}_{g,h,k}\\
    &\hspace{10pt}=\hp{}^{(g,h)\cdot(gh,k)\cdot ghk}{\Om}^L_{l,m}(^{g\cdot h\cdot k}{\Om}^L_{l,m})^{-1}\cdot\hp{}^{^{g\cdot h\cdot k}(l,m)}\bar{\Ga}_{g,h,k}\\
    &\hspace{10pt}=\hp{}^{(g,h)\cdot(gh,k)\cdot ghk}{\Om}^L_{l,m}\cdot\bar{\Ga}_{g,h,k}\cdot(^{g\cdot h\cdot k}{\Om}^L_{l,m})^{-1}.\label{eq:eq22}
\end{align}
Comparing (\ref{eq:eq21}) with (\ref{eq:eq22}), we obtain (\ref{eq:equality2}).

Finally, to prove the 4-cocycle condition, we will evaluate the expression 
\begin{equation}
    \bar{\Ga}_{g,h,k}\cdot\hp{}^{^g(h,k)}\bar{\Ga}_{g,hk,l}\cdot\hp{}^g\bar{\Ga}_{h,k,l}\cdot\hp{}^{^{g\cdot h}(k,l)\cdot\hp{}^g(h,kl)}\bar{\Ga}_{g,hkl,m}\cdot\hp{}^{g\cdot^h(k,l)}\bar{\Ga}_{h,kl,m}\cdot\hp{}^{g\cdot h}\bar{\Ga}_{k,l,m}    
\end{equation}
in two different ways. For brevity we denote $\omega(g,h,k,l)$ by $\omega_{g,h,k,l}$. On one hand,
\begin{align}
    &\bar{\Ga}_{g,h,k}\cdot\hp{}^{^g(h,k)}\bar{\Ga}_{g,hk,l}\cdot\hp{}^g\bar{\Ga}_{h,k,l}\cdot\hp{}^{^{g\cdot h}(k,l)\cdot\hp{}^g(h,kl)}\bar{\Ga}_{g,hkl,m}\cdot\hp{}^{g\cdot^h(k,l)}\bar{\Ga}_{h,kl,m}\cdot\hp{}^{g\cdot h}\bar{\Ga}_{k,l,m}\\
    &\hspace{10pt}=\omega_{g,h,k,l}\cdot\hp{}^{(g,h)}\bar{\Ga}_{gh,k,l}\cdot\hp{}\bar\Delta_{g,h,(k,l)}\cdot\hp{}^{^{g\cdot h}(k,l)}\bar{\Ga}_{g,h,kl}\\\nonumber
    &\hspace{20pt}\cdot\hp{}^{^{g\cdot h}(k,l)\cdot\hp{}^g(h,kl)}\bar{\Ga}_{g,hkl,m}\cdot\hp{}^{g\cdot^h(k,l)}\bar{\Ga}_{h,kl,m}\cdot\hp{}^{g\cdot h}\bar{\Ga}_{k,l,m}\\
    &\hspace{10pt}=\omega_{g,h,k,l}\cdot\omega_{g,h,kl,m}\cdot\hp{}^{(g,h)}\bar{\Ga}_{gh,k,l}\cdot\hp{}^{(g,h)\cdot\hp{}^{gh}(k,l)}\bar{\Ga}_{gh,kl,m}\cdot\hp{}\bar\Delta_{g,h,(k,l)}\\\nonumber
    &\hspace{20pt}\cdot\hp{}^{^{g\cdot h}(k,l)}\bar\Delta_{g,h,(kl,m)}\cdot\hp{}^{g\cdot h}\bar{\Ga}_{k,l,m}\cdot\hp{}^{^{g\cdot h\cdot k}(l,m)\cdot\hp{}^{g\cdot h}(k,lm)}\bar{\Ga}_{g,h,klm}\\
    &\hspace{10pt}=\omega_{g,h,k,l}\cdot\omega_{g,h,kl,m}\cdot\hp{}^{(g,h)}\bar{\Ga}_{gh,k,l}\cdot\hp{}^{(g,h)\cdot\hp{}^{gh}(k,l)}\bar{\Ga}_{gh,kl,m}\cdot\hp{}^{(g,h)\cdot gh}\bar{\Ga}_{k,l,m}\label{eq:equality3}\\\nonumber
    &\hspace{20pt}\cdot\hp{}\bar\Delta_{g,h,^k(l,m)}\cdot\hp{}^{^{g\cdot h\cdot k}(l,m)}\bar\Delta_{g,h,(k,lm)}\cdot\hp{}^{^{g\cdot h\cdot k}(l,m)\cdot\hp{}^{g\cdot h}(k,lm)}\bar{\Ga}_{g,h,klm}\\
    &\hspace{10pt}=\omega_{g,h,k,l}\cdot\omega_{g,h,kl,m}\cdot\omega_{gh,k,l,m}\cdot\hp{}^{(g,h)\cdot(gh,k)}\bar{\Ga}_{ghk,l,m}\cdot\hp{}^{(g,h)}\bar\Delta_{gh,k,(l,m)}\cdot\hp{}\bar\Delta_{g,h,^k(l,m)}\label{eq:4cocycle1}\\\nonumber
    &\hspace{20pt}\cdot\hp{}^{^{g\cdot h\cdot k}(l,m)}\left(^{(g,h)}\bar{\Ga}_{gh,k,lm}\cdot\hp{}\bar{\Delta}_{g,h,(k,lm)}\cdot\hp{}^{^{g\cdot h}(k,lm)}\bar{\Ga}_{g,h,klm}\right).
\end{align}
To obtain (\ref{eq:equality3}) we have used (\ref{eq:equality1}). On the other hand,
\begin{align}
    &\bar{\Ga}_{g,h,k}\cdot\hp{}^{^g(h,k)}\bar{\Ga}_{g,hk,l}\cdot\hp{}^g\bar{\Ga}_{h,k,l}\cdot\hp{}^{^{g\cdot h}(k,l)\cdot\hp{}^g(h,kl)}\bar{\Ga}_{g,hkl,m}\cdot\hp{}^{g\cdot^h(k,l)}\bar{\Ga}_{h,kl,m}\cdot\hp{}^{g\cdot h}\bar{\Ga}_{k,l,m}\\
    &\hspace{10pt}=\omega_{h,k,l,m}\cdot\hp{}\bar{\Ga}_{g,h,k}\cdot\hp{}^{^g(h,k)}\bar{\Ga}_{g,hk,l}\cdot\hp{}^{^g(h,k)\cdot\hp{}^g(hk,l)}\bar{\Ga}_{g,hkl,m}\\\nonumber
    &\hspace{20pt}\cdot\hp{}^g\left(\hp{}^{(h,k)}\bar{\Ga}_{hk,l,m}\cdot\hp{}\bar\Delta_{h,k,(l,m)}\cdot\hp{}^{^{h\cdot k}(l,m)}\bar{\Ga}_{h,k,lm}\right)\\
    &\hspace{10pt}=\omega_{h,k,l,m}\cdot\omega_{g,hk,l,m}\cdot\hp{}^{(g,h)\cdot(gh,k)}\bar{\Ga}_{ghk,l,m}\bar{\Ga}_{g,h,k}\cdot\hp{}^{^g(h,k)}\bar\Delta_{g,hk,(l,m)}\\\nonumber
    &\hspace{20pt}\cdot\hp{}^g\bar\Delta_{h,k,(l,m)}\cdot\hp{}^{^{g\cdot h\cdot k}(l,m)\cdot\hp{}^g(h,k)}\bar{\Ga}_{g,hk,lm}\cdot\hp{}^{g\cdot\hp{}^{h\cdot k}(l,m)}\bar{\Ga}_{h,k,lm}\\
    &\hspace{10pt}=\omega_{h,k,l,m}\cdot\omega_{g,hk,l,m}\cdot\hp{}^{(g,h)\cdot(gh,k)}\bar{\Ga}_{ghk,l,m}\cdot\hp{}^{(g,h)}\bar\Delta_{gh,k,(l,m)}\cdot\hp{}\bar\Delta_{g,h,^k(l,m)}\label{eq:equality4}\\\nonumber
    &\hspace{20pt}\cdot\hp{}^{^{g\cdot h\cdot k}(l,m)}\left(\bar{\Ga}_{g,h,k}\cdot\hp{}^{^g(h,k)}\bar{\Ga}_{g,hk,lm}\cdot\hp{}^{g}\bar{\Ga}_{h,k,lm}\right)\\
    &\hspace{10pt}=\omega_{h,k,l,m}\cdot\omega_{g,hk,l,m}\cdot\omega_{g,h,k,lm}\cdot\hp{}^{(g,h)\cdot(gh,k)}\bar{\Ga}_{ghk,l,m}\cdot\hp{}^{(g,h)}\bar\Delta_{gh,k,(l,m)}\cdot\hp{}\bar\Delta_{g,h,^k(l,m)}\label{eq:4cocycle2}\\\nonumber
    &\hspace{20pt}\cdot\hp{}^{^{g\cdot h\cdot k}(l,m)}\left(^{(g,h)}\bar{\Ga}_{gh,k,lm}\cdot\hp{}\bar{\Delta}_{g,h,(k,lm)}\cdot\hp{}^{^{g\cdot h}(k,lm)}\bar{\Ga}_{g,h,klm}\right).
\end{align}
To obtain (\ref{eq:equality4}) we have used (\ref{eq:equality2}). Comparing (\ref{eq:4cocycle1}) with (\ref{eq:4cocycle2}), we obtain the 4-cocycle condition (\ref{eq:4cocycle}).

\section{Coboundary ambiguity of the anomaly index}
\label{app:ambiguity}

In this appendix, we demonstrate that different choices of restricted operators $\Om^I_{g,h}$ and $U^A_g$ give rise to equivalent to 4-cocycles.

\subsection{Ambiguity from $\Om^I_{g,h}$}
\label{appendix:ambiguity1}

First, we consider the choice of restriction $\Om^I_{g,h}$ of $\Om'_{g,h}$ to $I$ in step 5 of the procedure. Suppose we instead choose the restriction
\begin{equation}
    \wt\Om^I_{g,h}=\La_{g,h}\Om^I_{g,h},
\end{equation}
where $\Lambda_{g,h}=\Lambda^a_{g,h}\Lambda^b_{g,h}$ is an arbitrary unitary supported in the vicinity of points $a$ and $b$. Given this choice, instead of $\Ga_{g,h,k}$ we obtain the operator
\begin{align}
    \wt\Ga_{g,h,k}&=\wt\Om_{g,h}\wt\Om_{gh,k}\left(\hp{}^g\wt\Om_{h,k}\wt\Om_{g,hk}\right)^{-1}\\&=\La_{g,h}\cdot\hp{}^{(g,h)}\La_{gh,k}\Ga_{g,h,k}\left(\hp{}^g\La_{h,k}\cdot\hp{}^{^g(h,k)}\La_{g,hk}\right)^{-1}.
\end{align}
Moreover, instead of $\Delta_{g,h,(k,l)}$, we obtain the operator
\begin{equation}
    \wt\Delta_{g,h,(k,l)}\equiv\hp{}^{\wt{(g,h)}\cdot gh}\wt\Om^I_{k,l}\left(\hp{}^{g\cdot h}\wt\Om^I_{k,l}\right)^{-1},
\end{equation}
where $\wt{(g,h)}$ is a shorthand for $\wt\Om^I_{g,h}$. We are free to decompose $\wt\Om_{g,h}^I$ as $\wt\Om^I_{g,h}=\wt\Om^L_{g,h}\wt\Om^R_{g,h}$ where $\wt\Om^L_{g,h}=\La^a_{g,h}\Om^L_{g,h}$ and $\wt\Om^R_{g,h}=\La^b_{g,h}\Om^R_{g,h}$. Thus the canonical restriction of $\wt\Delta_{g,h,(k,l)}$ can be expressed as
\begin{align}
    \wt\Delta^a_{g,h,(k,l)}&=\hp{}^{\wt{(g,h)}\cdot gh}\wt\Om^L_{k,l}\left(\hp{}^{g\cdot h}\wt\Om^L_{k,l}\right)^{-1}\\&=\La^a_{g,h}\cdot\hp{}^{(g,h)\cdot gh}\La^a_{k,l}\Delta^a_{g,h,(k,l)}\left(\hp{}^{g\cdot h}\La^a_{k,l}\cdot\hp{}^{^{g\cdot h}(k,l)}\La^a_{g,h}\right)^{-1}.
\end{align}
To continue the procedure, next we choose a restriction of $\wt\Ga_{g,h,k}$ to point $a$. We are free to choose, in particular,
\begin{equation}
    \widetilde\Ga^a_{g,h,k}=\La^a_{g,h}\cdot\hp{}^{(g,h)}\La^a_{gh,k}\Ga^a_{g,h,k}\left(\hp{}^g\La^a_{h,k}\cdot\hp{}^{^g(h,k)}\La^a_{g,hk}\right)^{-1}.
\end{equation}
Finally we obtain the 4-cocycle
\begin{equation}
    \wt\omega(g,h,k,l)=\wt\Ga^a_{g,h,k}\cdot\hp{}^{^g\wt{(h,k)}}\wt\Ga^a_{g,hk,l}\cdot\hp{}^g\wt\Ga^a_{h,k,l}\left(\hp{}^{\wt{(g,h)}}\wt\Ga^a_{gh,k,l}\cdot\hp{}\wt\Delta^a_{g,h,(k,l)}\cdot\hp{}^{^{g\cdot h}\wt{(k,l)}}\wt\Ga^a_{g,h,kl}\right)^{-1}.
\end{equation}
We now show that $\wt\omega(g,h,k,l)=\omega(g,h,k,l)$. For convenience, we use the notation $\bar\La_{g,h}\equiv\La^a_{g,h}$ and $\bar\Om_{g,h}\equiv\Om^I_{g,h}$. First, we evaluate the following expression:
\begin{align}
    &\wt\Gamma^a_{g,h,k}\cdot\hp{}^{^g\wt{(h,k)}}\wt\Gamma^a_{g,hk,l}\cdot\hp{}^g\wt\Gamma^a_{h,k,l}\\\nonumber
    &\hspace{10pt}=\bar\La_{g,h}\cdot\hp{}^{(g,h)}\bar\La_{gh,k}\wt\Ga^a_{g,h,k}\cdot\hp{}^{^g(h,k)}\bar\La^{-1}_{g,hk}\cdot\hp{}^g\bar\La^{-1}_{h,k}\\
    &\hspace{20pt}\cdot\hp{}^g\bar\La_{h,k}\cdot\hp{}^g\bar\Omega_{h,k}\bar\La_{g,hk}\cdot\hp{}^{(g,hk)}\bar\La_{ghk,l}\wt\Ga^a_{g,hk,l}\cdot\hp{}^{^g(hk,l)}\bar\La^{-1}_{g,hkl}\cdot\hp{}^g\bar\La^{-1}_{hk,l}\cdot\hp{}^g\bar\Omega^{-1}_{h,k}\cdot\hp{}^g\bar\La^{-1}_{h,k}\\\nonumber
    &\hspace{20pt}\cdot\hp{}^g\bar\La_{h,k}\cdot\hp{}^{g\cdot(h,k)}\bar\La_{hk,l}\cdot\hp{}^g\wt\Ga^a_{h,k,l}\cdot\hp{}^{g\cdot\hp{}^h(k,l)}\bar\La^{-1}_{h,kl}\cdot\hp{}^{g\cdot h}\bar\La^{-1}_{k,l}\\
    &\hspace{10pt}=\bar\La_{g,h}\cdot\hp{}^{(g,h)}\bar\La_{gh,k}\cdot\hp{}^{(g,h)\cdot(gh,k)}\bar\La_{ghk,l}\left(\wt\Ga^a_{g,h,k}\cdot\hp{}^{^g(h,k)}\wt\Ga^a_{g,hk,l}\cdot\hp{}^g\wt\Ga^a_{h,k,l}\right)\label{eq:OmegaAmbi1}\\\nonumber
    &\hspace{20pt}\cdot\hp{}^{^{g\cdot h}(k,l)\cdot\hp{}^g(h,kl)}\bar\La^{-1}_{g,hkl}\cdot\hp{}^{g\cdot\hp{}^h(k,l)}\bar\La^{-1}_{h,kl}\cdot\hp{}^{g\cdot h}\bar\La^{-1}_{k,l}.
\end{align}
Second, we evaluate the expression
\begin{align}
    &^{\wt{(g,h)}}\wt\Ga^a_{gh,k,l}\cdot\hp{}\wt\Delta^a_{g,h,(k,l)}\cdot\hp{}^{^{g\cdot h}\wt{(k,l)}}\wt\Ga^a_{g,h,kl}\\\nonumber
    &\hspace{10pt}=\bar\La_{g,h}\bar\Omega_{g,h}\bar\La_{gh,k}\cdot\hp{}^{(gh,k)}\bar\La_{ghk,l}\Ga^a_{gh,k,l}\cdot\hp{}^{^{gh}(k,l)}\bar\La^{-1}_{gh,kl}\cdot\hp{}^{gh}\bar\La^{-1}_{k,l}\bar\Omega_{g,h}^{-1}\bar\La_{g,h}^{-1}\\
    &\hspace{20pt}\cdot\bar\La_{g,h}\cdot\hp{}^{(g,h)\cdot gh}\bar\La_{k,l}\Delta^a_{g,h,(k,l)}\cdot\hp{}^{^{g\cdot h}(k,l)}\bar\La^{-1}_{g,h}\cdot\hp{}^{g\cdot h}\bar\La^{-1}_{k,l}\\\nonumber
    &\hspace{20pt}\cdot\hp{}^{g\cdot h}\bar\La_{k,l}\cdot\hp{}^{g\cdot h}\bar\Om_{k,l}\bar\La_{g,h}\cdot\hp{}^{(g,h)}\bar\La_{gh,kl}\Ga^a_{g,h,kl}\cdot\hp{}^{^{g}(h,kl)}\bar\La^{-1}_{g,hkl}\cdot\hp{}^{g}\bar\La^{-1}_{h,kl}\cdot\hp{}^{g\cdot h}\bar\Om_{k,l}^{-1}\cdot\hp{}^{g\cdot h}\bar\La_{k,l}^{-1}\\
    &\hspace{10pt}=\bar\La_{g,h}\cdot\hp{}^{(g,h)}\bar\La_{gh,k}\cdot\hp{}^{(g,h)\cdot(gh,k)}\bar\La_{ghk,l}\left(\hp{}^{(g,h)}\Ga^a_{gh,k,l}\cdot\hp{}^{(g,h)}\Delta^a_{g,h,(k,l)}\cdot\hp{}^{^{g\cdot h}(k,l)}\Ga^a_{g,h,kl}\right)\label{eq:OmegaAmbi2}\\\nonumber
    &\hspace{20pt}\cdot\hp{}^{^{g\cdot h}(k,l)\cdot\hp{}^{g}(h,kl)}\bar\La^{-1}_{g,hkl}\cdot\hp{}^{g\cdot\hp{}^h(k,l)}\bar\La^{-1}_{h,kl}\cdot\hp{}^{g\cdot h}\bar\La_{k,l}^{-1}.
\end{align}
Comparing (\ref{eq:OmegaAmbi1}) with (\ref{eq:OmegaAmbi2}), we find that $\wt\omega(g,h,k,l)=\omega(g,h,k,l)$.

\subsection{Ambiguity from $U_g^A$}
\label{appendix:ambiguity2}

Next, we consider the choice of restricted symmetries $U^A_g$. Suppose we instead choose the symmetry restrictions
\begin{equation}
    \wt{U}^A_g=\Sigma_gU^A_g
\end{equation}
where $\Sigma_g$ is an arbitrary 1D QCA supported in the vicinity of $\partial A$. Given this choice, instead of $\Omega_{g,h}$ we obtain the operator
\begin{equation}
    \wt\Om_{g,h}=\Sigma_g\cdot\hp{}^g\Sigma_h\Om_{g,h}\Sigma_{gh}^{-1}.
\end{equation}

Let us define the function $\mu:G\to\mathbb{Q}_+$ where $\mu(g)=\text{Ind}(\Sigma_g)$. Instead of the 2-cocycle $\nu(g,h)=\text{Ind}(\Om_{g,h})$, we obtain an equivalent 2-cocycle
\begin{equation}
    \wt\nu(g,h)=\nu(g,h)\frac{\mu(g)\mu(h)}{\mu(gh)}.
\end{equation}
Thus, different choices of symmetry restriction give 2-cocycles that differ by the 2-coboundary $d\mu$.

To continue the procedure, we define the operator $\wt{T}_{g,h}=T_{\wt\nu(g,h)}^{-1}$ acting on $\mathcal{H}_{\partial{A}}$.\footnote{In principle, the operator $\wt{T}_{g,h}$ may require ancilla qudits of dimension different from that of the original ancillary system. For simplicity we will assume that is not the case, and that $\wt{T}_{g,h}$ and $T_{g,h}$ act on the same Hilbert space $\mathcal{H}_{\partial A}$.} Furthermore, let us define an operator $T_g=T_{\mu(g)}^{-1}$ for each $g\in G$, such that $\Sigma_g\otimes T_g$ is a 1D FDQC. Due to (\ref{eq:Tcondition}), we may decompose $\wt T_{g,h}$ as
\begin{equation}
    \wt T_{g,h}=T_gT_hT_{g,h}T_{gh}^{-1}.
\end{equation}
Next, we must choose a restriction of $\wt\Omega'_{g,h}\equiv\wt\Om_{g,h}\otimes \wt T_{g,h}$ to the interval $I$. To make a convenient choice, we first choose a restriction $\bar\Sigma$ of $\Sigma'_g\equiv\Sigma_g\otimes T_g$ to the interval $I'=[a-\delta,b-\delta]$, where $\delta$ is some distance larger than the range of the symmetry. We then choose the following restriction of $\wt\Om_{g,h}'$:
\begin{equation}
    \wt\Om^I_{g,h}=\bar\Sigma_g\cdot\hp{}^g\bar\Sigma_h\bar\Om_{g,h}(\bar\Sigma_{gh})^{-1}
\end{equation}
where we use the notation $\bar\Om_{g,h}\equiv\Om^I_{g,h}$. Given this choice, in place of $\Ga_{g,h,k}$ we obtain the operator
\begin{align}
    \wt\Ga_{g,h,k}&=\wt\Om^I_{g,h}\wt\Om^I_{gh,k}\left(\hp{}^g\wt\Om^I_{h,k}\wt\Om^I_{g,hk}\right)^{-1}\\
    &=\bar\Sigma_g\cdot\hp{}^g\bar\Sigma_h\bar\Om_{g,h}\cdot\hp{}^{gh}\bar\Sig_k\bar\Om_{gh,k}\bar\Om_{g,hk}^{-1}\cdot\hp{}^g\bar\Sig_{hk}^{-1}\bar\Sig_g^{-1}\cdot\hp{}^{\wt{g}}(\bar\Sigma_{hk}\bar\Om_{h,k}^{-1}\cdot\hp{}^h\bar\Sig_k^{-1}\bar\Sig_h^{-1})\\
    &=\bar\Sigma_g\cdot\hp{}^g\bar\Sigma_h\cdot\hp{}^{(g,h)\cdot gh}\bar\Sig_k\Ga_{g,h,k}\cdot\hp{}^{^g\bar\Sig_{hk}^{-1}}\bar{\bar\Sig}_g\cdot\hp{}^{g\cdot h}\bar\Sig_k^{-1}\cdot\hp{}^g\bar\Sig_h^{-1}\Sig_g^{-1},
\end{align}
where $\tilde{g}$ is a shorthand for $\wt{U}^A_g$. Here we have defined the operator $\bar{\bar\Sigma}_g=
\bar\Sigma^{-1}_g\Sigma_g$. We note that since $\bar\Sig_g$ is supported on the interval $I'$, it follows that $\bar{\bar\Sig}_g$ commutes with $\bar\Om_{h,k}$.

Instead of $\Delta_{g,h,(k,l)}$, we obtain the operator
\begin{equation}
    \wt\Delta_{g,h,(k,l)}=\hp{}^{\wt{(g,h)}\cdot\wt{gh}}\wt\Om_{k,l}\cdot\hp{}^{\wt{g}\cdot\wt{h}}\wt\Om^{-1}_{k,l},
\end{equation}
where $\wt{(g,h)}$ is a shorthand for $\wt\Om^I_{g,h}$. Let us choose an arbitrary decomposition $\bar\Sig_g=\Sig^L_g\Sig^R_g$. We are free to decompose $\wt\Om_{g,h}^I$ as $\wt\Om^I_{g,h}=\wt\Om^L_{g,h}\wt\Om^R_{g,h}$ where $\wt\Om^L_{g,h}=\Sig^L_g\cdot\hp{}^g\Sig^L_h\Om^L_{g,h}(\Sig^L_{gh})^{-1}$ and similarly for $\wt\Om^R_{g,h}$. Thus the canonical restriction of $\wt\Delta_{g,h,(k,l)}$ can be expressed as
\begin{align}
    \wt\Delta^a_{g,h,(k,l)}&=\hp{}^{\wt{(g,h)}\cdot\wt{gh}}\wt\Om^L_{k,l}\left(\hp{}^{\wt{g}\cdot\wt{h}}\wt\Om^L_{k,l}\right)^{-1}\\\nonumber
    &=\bar\Sig_g\cdot\hp{}^g\bar\Sig_h\bar\Om_{g,h}\bar\Sig_{gh}^{-1}\cdot\hp{}^{\wt{gh}}\big(\Sig^L_k\cdot\hp{}^k\Sig^L_l\Om^L_{k,l}(\Sig^L_{kl})^{-1}\big)\bar\Sig_{gh}\bar\Om_{g,h}^{-1}\cdot\hp{}^g\bar\Sig_h^{-1}\bar\Sig_g^{-1}\left(\hp{}^{\wt{g}\cdot\wt{h}}\wt\Om^L_{k,l}\right)^{-1}.
\end{align}

Proceeding onward, we choose a restriction of $\wt\Ga_{g,h,k}$ to point $a$. We choose for simplicity
\begin{align}
    \wt\Ga^a_{g,h,k}&=\bar\Sigma_g\cdot\hp{}^g\bar\Sigma_h\cdot\hp{}^{(g,h)\cdot gh}\bar\Sig_k\Ga^a_{g,h,k}\cdot\hp{}^{^g\bar\Sig_{hk}^{-1}}\bar{\bar\Sig}_g\cdot\hp{}^{g\cdot h}\bar\Sig_k^{-1}\cdot\hp{}^g\bar\Sig_h^{-1}\Sig_g^{-1}.
\end{align}
Finally, we obtain the cocycle
\begin{equation}
    \wt\omega(g,h,k,l)=\wt\Ga^a_{g,h,k}\cdot\hp{}^{^{\wt{g}}\wt{(h,k)}}\wt\Ga^a_{g,hk,l}\cdot\hp{}^{\wt{g}}\wt\Ga^a_{h,k,l}\left(\hp{}^{\wt{(g,h)}}\wt\Ga^a_{gh,k,l}\cdot\hp{}\wt\Delta^a_{g,h,(k,l)}\cdot\hp{}^{^{\wt{g}\cdot \wt{h}}\wt{(k,l)}}\wt\Ga^a_{g,h,kl}\right)^{-1}.
\end{equation}
We now show that $\wt\omega(g,h,k,l)=\omega(g,h,k,l)$. First, we evaluate the following expression:
\begin{align}
    &\wt\Ga^a_{g,h,k}\cdot\hp{}^{^{\wt{g}}\wt{(h,k)}}\wt\Ga^a_{g,hk,l}\cdot\hp{}^{\wt{g}}\wt\Ga^a_{h,k,l}\\\nonumber
    &\hspace{10pt}=\bar\Sig_g\cdot\hp{}^g\bar\Sig_h\cdot\hp{}^{(g,h)\cdot gh}\bar\Sig_k\Ga^a_{g,h,k}\cdot\hp{}^{^g\bar\Sig_{hk}^{-1}}\bar{\bar\Sig}_g\cdot\hp{}^{g\cdot h}\bar\Sig_k^{-1}\cdot\hp{}^g\bar\Sig_h^{-1}\Sig_g^{-1}\\\nonumber
    &\hspace{20pt}\cdot\Sig_g\cdot\hp{}^g\bar\Sig_h\cdot\hp{}^{g\cdot h}\bar\Sig_k\cdot\hp{}^g\bar\Om_{h,k}\cdot\hp{}^g\bar\Sig_{hk}^{-1}\Sig_g^{-1}\cdot\\
    &\hspace{20pt}\cdot\bar\Sig_g\cdot\hp{}^g\bar\Sig_{hk}\cdot\hp{}^{(g,hk)\cdot ghk}\bar\Sig_l\Ga^a_{g,hk,l}\cdot\hp{}^{^g\bar\Sig_{hkl}^{-1}}\bar{\bar\Sig}_g\cdot\hp{}^{g\cdot hk}\bar\Sigma_l^{-1}\cdot\hp{}^g\bar\Sigma_{hk}^{-1}\Sig_g^{-1}\\\nonumber
    &\hspace{20pt}\cdot\Sig_g\cdot\hp{}^g\bar\Sig_{hk}\cdot\hp{}^g\bar\Om_{h,k}^{-1}\cdot\hp{}^{g\cdot h}\bar\Sig_k^{-1}\cdot\hp{}^g\bar\Sig_h^{-1}\Sig_g^{-1}\\\nonumber
    &\hspace{20pt}\cdot\Sig_g\cdot\hp{}^g\bar\Sig_h\cdot\hp{}^{g\cdot h}\bar\Sig_k\cdot\hp{}^{g\cdot(h,k)\cdot hk}\bar\Sig_l\cdot\hp{}^g\Ga^a_{h,k,l}\cdot\hp{}^{g\cdot\hp{}^h\bar\Sig_{kl}^{-1}}\bar{\bar\Sig}_h\cdot\hp{}^{g\cdot h\cdot k}\bar\Sigma_l^{-1}\cdot\hp{}^{g\cdot h}\bar\Sig_k^{-1}\cdot\hp{}^g\Sig_h^{-1}\Sig_g^{-1}\\
    &\hspace{10pt}=\bar\Sig_g\cdot\hp{}^g\bar\Sig_h\cdot\hp{}^{(g,h)\cdot gh}\bar\Sig_k\cdot\hp{}^{(g,h)\cdot(gh,k)\cdot ghk}\bar\Sig_l\left(\Ga^a_{g,h,k}\cdot\hp{}^{^g(h,k)}\Ga^a_{g,hk,l}\cdot\hp{}^g\Ga^a_{h,k,l}\right)\label{eq:OmegaAmbi3}\\\nonumber
    &\hspace{20pt}\cdot\hp{}^{^g\bar\Sig_{hkl}^{-1}}\bar{\bar\Sig}_g\cdot\hp{}^{g\cdot\hp{}^h\bar\Sig_{kl}^{-1}}\bar{\bar\Sig}_h\cdot\hp{}^{g\cdot h\cdot k}\bar\Sigma_l^{-1}\cdot\hp{}^{g\cdot h}\bar\Sig_k^{-1}\cdot\hp{}^g\Sig_h^{-1}\Sig_g^{-1}.
\end{align}
Second, we evaluate the expression
\begin{align}
    &^{\wt{(g,h)}}\wt\Ga^a_{gh,k,l}\cdot\hp{}\wt\Delta^a_{g,h,(k,l)}\cdot\hp{}^{^{\wt{g}\cdot \wt{h}}\wt{(k,l)}}\wt\Ga^a_{g,h,kl}\\\nonumber
    &\hspace{10pt}=\bar\Sig_g\cdot\hp{}^g\bar\Sig_h\bar\Om_{g,h}\cdot\hp{}^{gh}\bar\Sig_k\cdot\hp{}^{(gh,k)\cdot ghk}\bar\Sig_l\Ga^a_{gh,k,l}\cdot\hp{}^{^{gh}\bar\Sig_{kl}^{-1}}\bar{\bar\Sig}_{gh}\cdot\hp{}^{gh\cdot k}\bar\Sig_l^{-1}\cdot\hp{}^{gh}\bar\Sig_k^{-1}\bar{\bar\Sig}_{gh}\bar\Om_{g,h}^{-1}\cdot\hp{}^g\bar\Sig_h^{-1}\bar\Sig_g^{-1}\\
    &\hspace{20pt}\cdot\bar\Sig_g\cdot\hp{}^g\bar\Sig_h\bar\Om_{g,h}\bar\Sig_{gh}^{-1}\cdot\hp{}^{\wt{gh}}\big(\Sig^L_k\cdot\hp{}^k\Sig^L_l\Om^L_{k,l}(\Sig^L_{kl})^{-1}\big)\bar\Sig_{gh}\bar\Om_{g,h}^{-1}\cdot\hp{}^g\bar\Sig_h^{-1}\bar\Sig_g^{-1}\label{eq:Sig}\\\nonumber
    &\hspace{20pt}\cdot\bar\Sig_g\cdot\hp{}^g\bar\Sig_h\cdot\hp{}^{(g,h)\cdot gh}\bar\Sig_{kl}\Ga^a_{g,h,kl}\cdot\hp{}^{^g\bar\Sig_{hkl}^{-1}}\bar{\bar\Sig}_g\cdot\hp{}^{g\cdot h}\bar\Sig_{kl}^{-1}\cdot\hp{}^g\bar\Sig_h^{-1}\Sig_g^{-1}\cdot\hp{}^{\wt{g}\cdot\wt{h}}\big(\Sig^L_k\cdot\hp{}^k\Sig^L_l\Om^L_{k,l}(\Sig^L_{kl})^{-1}\big)^{-1}\\
    &\hspace{10pt}=\bar\Sig_g\cdot\hp{}^g\bar\Sig_h\cdot\hp{}^{(g,h)\cdot gh}\bar\Sig_k\cdot\hp{}^{(g,h)\cdot(gh,k)\cdot ghk}\bar\Sig_l\left(\hp{}^{(g,h)}\Ga^a_{gh,k,l}\cdot\Delta^a_{g,h,k,l}\cdot\hp{}^{^{g\cdot h}(k,l)}\Ga^a_{g,h,kl}\right)\label{eq:OmegaAmbi4}\\\nonumber
    &\hspace{20pt}\cdot\hp{}^{^g\bar\Sig_{hkl}^{-1}}\bar{\bar\Sig}_g\cdot\hp{}^{g\cdot\hp{}^h\bar\Sig_{kl}^{-1}}\bar{\bar\Sig}_h\cdot\hp{}^{g\cdot h\cdot k}\bar\Sig_l^{-1}\cdot\hp{}^{g\cdot h}\bar\Sig_k^{-1}\cdot\hp{}^g\Sig_h^{-1}\Sig_g^{-1}.
\end{align}
To obtain the first equality, we have used the fact that $^{^{\wt{g}\cdot \wt{h}}\wt{(k,l)}}\wt\Ga^a_{g,h,kl}=\hp{}^{^{\wt{g}\cdot \wt{h}}\wt\Om^L_{k,l}}\wt\Ga^a_{g,h,kl}$. Comparing (\ref{eq:OmegaAmbi3}) with (\ref{eq:OmegaAmbi4}), we find that $\wt\omega(g,h,k,l)=\omega(g,h,k,l)$.

\bibliography{bibliography.bib}

\end{document}